\pgfplotsset{compat=newest}
\pgfplotsset{plot coordinates/math parser=false} 
\newtheorem{theorem}{Theorem} 
\newlength\tindent
\renewcommand{\indent}{\hspace*{\tindent}} 
\begin{document} 
\title{\huge{A Low-Complexity Framework for Joint User Pairing and Power Control for Cooperative NOMA in 5G and Beyond Cellular Networks}}
\author{\IEEEauthorblockN{Phuc Dinh$^*$},\thanks{Phuc Dinh, Mohamed Amine Arfaoui and Chadi Assi are with Concordia Institute for Information Systems Engineering (CIISE), Concordia University, Montreal, Canada, e-mails:$\{\text{p\_dinh}$@encs, $\text{m\_arfaou}$@encs, assi@ciise$\}$.concordia.ca. Sanaa Sharafeddine is with the the School of Arts and Sciences (SAS), Beirut, Lebanon, email: sanaa.sharafeddine@lau.edu.lb. Ali Ghrayeb is with the Electrical and Computer Engineering (ECE) department, Texas A \& M University at Qatar, Doha, Qatar, e-mail: ali.ghrayeb@qatar.tamu.edu. Part of this work will be presented at IEEE Globecom 2019 \cite{Phuc2019joint}. The statements made herein are solely the responsibility of the authors.} \IEEEauthorblockN{Mohamed Amine Arfaoui}, \IEEEauthorblockN{Sanaa Sharafeddine}, \IEEEauthorblockN{Chadi Assi} and \IEEEauthorblockN{Ali Ghrayeb}} 
\maketitle 
\thispagestyle{plain}
\begin{abstract}
This paper investigates the performance of cooperative non-orthogonal multiple access (C-NOMA) in a cellular downlink system. The system model consists of a base station (BS) serving multiple users, where users with good channel quality can assist the transmissions between the BS and users with poor channel quality through either half-duplex (HD) or full-duplex (FD) device-to-device (D2D) communications. We formulate and solve a novel optimization problem that jointly determines the optimal D2D user pairing and the optimal power control scheme, where the objective is maximizing the achievable sum rate of the whole system while guaranteeing a certain quality of service (QoS) for all users. The formulated problem is a mixed-integer non-linear program (MINLP) which is generally NP-hard. To overcome this issue, we reconstruct the original problem into a bi-level optimization problem that can be decomposed into two sub-problems to be solved independently. The outer problem is a linear assignment problem which can be efficiently handled by the well-known Hungarian method. The inner problem is still a non-convex optimization problem for which finding the optimal solution is challenging. However, we derive the optimal power control policies for both the HD and the FD schemes in closed-form expressions, which makes the computational complexity of the inner problems polynomial for every possible pairing configurations. These findings solve ultimately the original MILNP in a timely manner that makes it suitable for real-time and low latency applications. Our simulation results show that the proposed framework outperforms a variety of proposed schemes in the literature and that it can obtain the optimal pairing and power control policies for a network with 100 users in a negligible computational time. 
\end{abstract}
\begin{IEEEkeywords}
Cooperative non-orthogonal-multiple-access, device-to-device communication, power control, pairing policy, Hungarian method, half-duplex, full-duplex, mode selection.
\end{IEEEkeywords}
\section{Introduction} \label{sec:introduction}
\subsection{Motivation}
\indent The total data traffic is expected to become about 49 exabytes per month by 2021, while in 2016, it was approximately 7.24 exabytes per month \cite{Intro1}. With this dramatic increase, fifth-generation (5G) networks and beyond must urgently provide high data rates, seamless connectivity and ultra-low latency communications \cite{Intro2,Intro3,Intro4}. In addition, with the emergence of the Internet-of-things (IoT) networks, the number of connected devices to the internet is increasing exponentially \cite{Intro5,Intro6}. This fact implies not only a significant increase in data traffic, but also the emergence of some IoT services with crucial requirements. Such requirements include higher data rates, higher connection density and ultra reliable low latency communication (URLLC). Hence, traditional radio-frequency (RF) networks, which are already crowded, are unable to satisfy these high demands \cite{Intro7}. Network densification \cite{Intro8,Intro9} has been proposed as a solution to increase the capacity and coverage of 5G networks. However, with the continuous dramatic growth in data traffic, researchers from both industry and academia are trying to explore new network architectures, new transmission techniques and new communication technologies to meet these demands. Among the new communication technologies that have been proposed as auspicious solutions for 5G and beyond are non-orthogonal multiple access (NOMA), device-to-device (D2D) communication, half duplex (HD) and full-duplex (FD) communications. \\ 
\indent NOMA is capable of supporting more users than the number of available orthogonal resources \cite{islam2017power}, thereby leading to higher spectral efficiency and user fairness when compared to standard orthogonal multiple access (OMA) techniques.\footnote{In this paper, the term "NOMA" is restricted to power-domain NOMA as distinct from its code-domain NOMA counterpart.} The principle of NOMA leverages the concept of superposition coding (SPC) at the transmitter, to multiplex users in power-domain, and successive interference cancellation (SIC) at the receiver \cite{6861434}. However, as a standalone technique, NOMA still cannot adequately fulfil the demanding specifications of 5G networks and beyond. In fact, an inherent limitation of NOMA lies in the requirement that the allocated power to a user with poor channel conditions needs to be high for successful decoding of the superimposed signal. This requirement generally reduces the spectral (and power) efficiency since the poor channel will absorb a large portion of the available power budget. Thus, recent research trends aim to modularize and integrate NOMA with other advanced and 5G-enabled techniques, such as multiple-input-multiple-output (MIMO) \cite{liu2016capacity}, hybrid automatic repeat request (HARQ)\cite{li2015investigation}, and device-to-device (D2D) communications \cite{zhao2016noma}. \\ 
\indent A recent appealing extension for NOMA is cooperative NOMA (C-NOMA), taking advantage of desirable attributes of NOMA, FD/HD, and D2D communications. For a practical scenario, each user with a high channel gain, which is referenced as strong user, can act as relay to assist communications between the transmitter and a user with a poor channel gain, which is referenced as weak user. Each weak user can then combine both signals coming from the transmitter and from the associated strong user. The novelty of C-NOMA revolves around adding the D2D communication factor as a new degree of diversity to the direct downlink transmission. Intuitively speaking, each weak user can receive its message either from the BS or from another user, depending on which one is more favorable. 
\subsection{Literature Review}
The performance gain of NOMA comes at the expense of design complexity. Therefore, extensive performance analysis and efficient performance optimization scheme under a wide range of network scenarios have been studied in  \cite{Ding:NOMA:Application,islam2017powerNOMA,PA:NOMA1,PA:NOMA2,PA:NOMA3, PA:NOMA4, Islam:ResourceAllocation:NOMA,NOMAPairing:MatchingAlgorithm,Letter:NOMA:Pairing,Ding:Impact:Pairing,Ding:NOMA:ModeSelection}. 
In \cite{Ding:NOMA:Application}, the authors presented the basis of NOMA and the relations of NOMA to a few network paradigms such as MIMO or cognitive radio (CV). In \cite{islam2017powerNOMA},  Islam \textit{et al} provided a comprehensive survey on the recent progress of NOMA in terms of capacity analysis, power allocation, pairing, and user fairness. As a major aspect of NOMA literature, resource allocation for NOMA-based networks are investigated in \cite{PA:NOMA1,PA:NOMA2,PA:NOMA3, PA:NOMA4, Islam:ResourceAllocation:NOMA}. To mitigate the impact of error propagation due to imperfect SIC and to reduce the overall complexity, pairing (and clustering) is essential for networks that have large number of users. Motivated by this, optimal pairing policy for NOMA has been studied in  \cite{NOMAPairing:MatchingAlgorithm,Letter:NOMA:Pairing,Ding:Impact:Pairing}. User pairing does not only offer enhanced scalability and modularity for system optimization but also entails potential usage of D2D communications among users to further enhance system throughput, energy efficiency and fairness. Optimal pairing scheme for sum-rate maximization for multi-user NOMA networks has been proposed in \cite{Zhu:OptimalPairing}. Thus, the literature on NOMA is considered rich, yet NOMA still cannot adequately fulfil the demanding specifications of 5G networks and beyond in terms of high data rates and massive connectivity. \\ 
\indent By combining NOMA with HD/FD and D2D communications, C-NOMA takes the advantage of the SIC process combined with decode-and-forward procedure at users with high channel conditions to increase the reception diversity at the users that experience severe channel fading, and hence increasing the total throughput \cite{7117391}. As opposed to NOMA, the research on C-NOMA is still far from being mature. One observation is that both the pairing policy and the power allocation scheme of C-NOMA systems are not a direct inheritance of those of conventional NOMA. One of the key challenges of investigating C-NOMA lies in the complicated achievable rate expressions which capture the  characteristics of NOMA as well as the HD/FD decode-and-forward procedures. Specifically, the achievable rate of a given user within a C-NOMA system is a non-concave and non-differentiable function, which hinders the direct application of derivative-based numerical optimization frameworks \cite{BoydS:98:LAA}. In \cite{C-NOMA:2users1,C-NOMA:2users2,Liu:C-NOMA:2users,C-NOMA:2users3,8417519}, the performance of C-NOMA, measured in terms of outage probability, error performance and capacity, were investigated. However, these works focused mainly on two-user setting for simplicity and tractability purposes. In addition, the adopted performance metric is the max-min achievable rate, which is known to be bandwidth-inefficient since the majority of resource is allocated to the user with poor channel gain to maintain fairness. A limited amount of works in the literature have formally studied the performance of multi-user C-NOMA \cite{Guo:MultiuserCNOMA, Zhou:multiuserCNOMA1,Liu:MultiuserCNOMA2:SWIPT, Alouini:VLC:C-NOMA}. In \cite{Guo:MultiuserCNOMA,Zhou:multiuserCNOMA1}, the authors studied the performance of C-NOMA in terms of outage performance. However, these works have either employed random or heuristic distance-based pairing, which limits the performance gains of C-NOMA. Additionally, in \cite{Zhou:multiuserCNOMA1}, simultaneous wireless information and power transfer was integrated with a C-NOMA system to reduce the energy consumption at the users. However, the pairing policy and power allocation problem were not investigated. Recently, Obeed \textit{et al}, \cite{Alouini:VLC:C-NOMA} studied the joint problem of user pairing and power allocation with HD links and a fixed relaying power for hybrid radio-frequency (RF) and visible light communication (VLC) systems. However, assuming a fixed relaying power is not a practical assumption for realistic scenarios. In fact, assuming that the strong user uses the maximum allowed relaying power is technically unappealing, since power at user devices is in general small, the energy efficient of C-NOMA schemes should be considered carefully instead of assuming fixed powers at the relaying devices. 
\subsection{Contributions}
We investigate in this paper the performance of C-NOMA in a cellular downlink system that consists of a BS that wants to serve multiple users within a region of service. In this system, users that have the capability of either HD or FD communications can assist the transmissions between the BS and users with poor channel quality through D2D communications. We formulate and solve a novel optimization problem that jointly determines the optimal D2D user pairing policy, the optimal power control scheme, and the best D2D communication mode, i.e., HD or FD, where the objective is maximizing the achievable sum rate of the whole system while guaranteeing a certain quality of service (QoS) for all users. \\ 
\indent In light of the above background, and to the best of our knowledge, this problem has not been investigated in the literature. The proposed formulation is modular and can be applied for all possible relaying schemes, such as HD, FD or hybrid HD/FD. However, the formulated problem ends up to be a mixed-integer non-convex program, which involves not only binary decision variables but also non-smooth utility functions. Thus, derivative-based convexification methods can not be directly applicable \cite{KTJ:16:TSP}.\footnote{Note that most previous works considering this class of problems typically assume overprovisioning of resources, i.e, high power and low minimum-rate requirement, to guarantee the feasibility to apply proposed algorithms.} Alternatively, we reformulate the problem into a bi-level optimization problem, consisting of one outer problem and one inner problem.\footnote{It is important to clarify that the proposed bilevel optimization method is fundamentally different from BCD method proposed by \cite{Alouini:VLC:C-NOMA} although both involve decomposition of the original problem into subproblems}. The outer problem is a classic linear assignment problem. Thus, standard matching algorithms such as the Hungarian method can be applied. For the inner problem, we derive the feasibility conditions of the formulated problem as relations between the power resources, the QoS requirements, and the channel state information (CSIs). Then, we derive the closed-form solutions of the power control scheme for both the HD and FD cases. These closed-form solutions, with computational complexity $\mathcal{O}(1)$, facilitates the application of the Hungarian method, and thus, the whole problem can be solved in a polynomial time. \\
\indent The rest of the paper is organized as follows. Section \ref{sec:systemmodel} presents the system model. Section \ref{sec:Achievable Rates Analysis} presents the achievable rates analysis. Section \ref{sec:Problem Formulation and Solution Approach} presents the optimal pairing policy and power control scheme. Sections \ref{sec:Simulation Results} and \ref{Sec:Conclusion} presents the simulation results and the conclusion, respectively. 
\section{System Model}
\label{sec:systemmodel}
\subsection{Network Model}
\begin{figure}[t]
\centering 
\includegraphics[width = 0.6\linewidth, height = 0.25\textwidth]{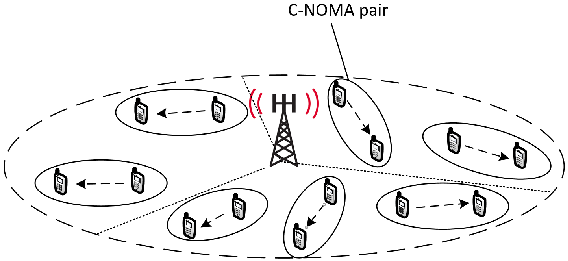}
\caption{Proposed downlink C-NOMA for a cellular network. The network is divided into $M = 3$ sectors that are served via orthogonal channels.}
\label{fig:Spatial}
\end{figure}
We consider a standard single-cell cellular system that consists of a BS equipped with $M$ antennas and serving $M$ disjoint sectors, where each sector is served with one antenna as shown in Fig.~\ref{fig:Spatial}. In this paper, we investigate the performance of one typical sector within which we assume that there exist $2K$ spatially dispersed active users (AUs). It is important to highlight here that considering an odd number of AUs does not affect the generality of the system model, since if the number of AUs is even, we still can adopt the same network model by adding an extra virtual AU which has zero channel gain. The AUs can be divided based on their channel gains into two disjoint sets of users, which we denote by $\mathcal{S}$ and $\mathcal{W}$. The set $\mathcal{S}$ contains the AUs that have high channel gain and such users are referred to as "strong AUs". On the other hand, the set $\mathcal{W}$ contains the users with low channel gains and these users are referred to as "weak AUs". According to C-NOMA principle \cite{7117391}, each weak AU from $\mathcal{W}$ is paired with exactly one strong AU from $\mathcal{S}$ and the resulting distinct pairs of (strong AU, weak AU) are served simultaneously over orthogonal and equally divided frequency subchannels in order to cancel the inter-pair interference. Moreover, the AUs within each pair are served using NOMA principle, where additionally, the strong AU can assist the communication between the BS and the weak AU. Obviously, the system performance depends on the pairing configurations and the communication links within each pair, which is the focus of this paper. In the following subsection, we investigate the transmission model within each pair of AUs.
\subsection{Transmission Model}
\begin{figure}[t]
\centering
\includegraphics[width=0.5\linewidth, height = 0.4\textwidth]{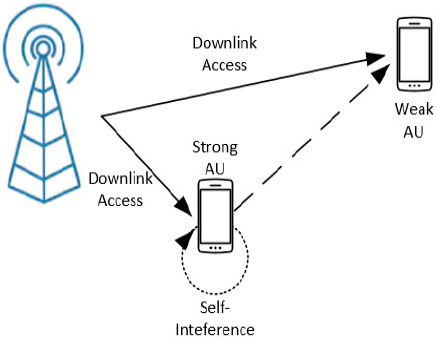}
\caption{Illustration of the transmission phases within a FD C-NOMA pair.}
\label{fig:Transmission}
\end{figure}
According to C-NOMA principle, and as presented in Fig.~\ref{fig:Transmission}, the transmission model within each pair of AUs consists of two phases that are detailed as follows. 
\begin{enumerate}
    \item The NOMA downlink transmission phase: the BS applies superposition coding (SC) on the messages intended to the strong and weak AUs and transmit the superimposed message to both of them \cite{6204010}. Then, following NOMA principle, the weak AU treats the interference from the strong AUs as noise and decode its own message \cite{6204010}.
    \item The D2D cooperative relaying phase: the strong AU first applies SIC to decode the intended message for weak AU. Second, it subtracts the decoded message of the weak AU from its own reception. Then, it decodes its own message from its resulting interference-free reception \cite{6861434}. Finally, it relays this decoded message for the weak AU through a D2D channel, therefore enhancing the signal reception diversity \cite{Caire:DecodeAndForward}.
\end{enumerate}
For the case of HD communication, the two transmission phases occur on consecutive resource blocks. However, for the case of FD communication, they occur on the same resource block, which comes with the cost of inducing SI at the strong AU (the dotted line in Fig. \ref{fig:Transmission}).
\section{Achievable Rate Analysis}
\label{sec:Achievable Rates Analysis}
In this section, the analysis of the achievable rates for a downlink C-NOMA system for both HD and FD cases are presented. Consider a pair of AUs $(m,n) \in \llbracket 1,K \rrbracket^2$, where $m$ and $n$ denote the indices of the strong and the weak AU, respectively, within the pair. For both cases of HD and FD relaying and at each channel use, the resulting signal at the BS after applying SC is expressed as
\begin{equation}
s =\sqrt{ \alpha_{m,n} P_{\rm BS} } s_{n} + \sqrt{\left(1-\alpha_{m,n}\right)  P_{\rm BS} } s_{m},
\label{eq:transmitsignal}
\end{equation}
where $s_m$ and $s_n$ represent the intended messages of the strong AU and the weak AU, respectively, such that $\mathbb{E}(|s_n|^{2})=\mathbb{E}(|s_m|^{2})=1$, $P_{\rm BS}$ represents the total average power available at the BS and $\alpha_{m,n} \in [0,1]$ is the power allocation factor, i.e., $\left(1-\alpha_{m,n} \right)  P_{\rm BS}$ and $\alpha_{m,n} P_{\rm BS}$ represent the transmit powers allocated to the strong AU and the weak AU, respectively. 
\subsection{HD C-NOMA}
For the case of HD C-NOMA, the received signal at the strong AU at each channel use is given by
\begin{equation}
\begin{split}
y_m = h_m \left(\sqrt{\alpha_{m,n} P_{\rm BS}}s_n+ \sqrt{\left(1-\alpha_{m,n}\right)P_{\rm BS}}s_m \right) +\omega_m,
\end{split}
\label{eq:strongreceivedsignal-half}
\end{equation}
where $h_m$ is the channel gain from the BS to the strong AU and $\omega_m$ is a zero-mean unit-variance additive Gaussian noise (AWGN). At the strong AU, SIC is applied to decode the message $s_n$ of the weak AU. Therefore, the achievable rate of the strong AU to decode the message intended for the weak AU can be expressed as
\begin{equation}
R^{\rm H}_{m,n} = \frac{1}{2}\log_2\left(1+ \frac{\alpha_{m,n} P_{\rm BS}\gamma_m}{(1-\alpha_{m,n})P_{\rm BS}\gamma_m+1}\right),
\label{Rate-Hsw}
\end{equation}
where $\gamma_m = h_m^2$. After subtracting the message $s_n$ from its reception, the strong AU can decode its own message with an achievable rate that is given by
\begin{equation}
R^{\rm H}_{m} = \frac{1}{2}\log_2 \left(1+(1-\alpha_{m,n}) P_{\rm BS}\gamma_m\right)\label{Rate-Hs}.
\end{equation}
On the other hand, the received signal at the weak AU from the BS in the direct transmission phase is expressed as
\begin{equation}
\begin{split}
y_n = h_n\left(\sqrt{\alpha_{m,n} P_{\rm BS}}s_n+\sqrt{(1-\alpha_{m,n})P_{\rm BS}}s_m\right)+ \omega_n,
\end{split}
\end{equation}
where $h_n$ is the channel gain from the BS to the weak AU and $\omega_n$ is a zero-mean unit-variance AWGN. In the cooperative relaying phase, the strong AU forwards the message $s_n$ to the weak AU. Thus, the received signal at the weak AU in this phase is given by
\begin{equation}
y_n = h^{\rm d}_{m,n}\sqrt{P^{\rm d}_{m,n}}s_n + \omega_n,
\end{equation}
where $h^{\rm d}_{m,n}$ is the channel gain from the strong AU to the weak AU and $P^{\rm d}_{m,n}$ is the D2D transmit relaying power. Since the weak AU receives duplicate messages from both the BS and the strong AU, repetition decoding (RD) can be applied to decode the replicated information \cite{Bossert:ChannelCodingBook}. In other words, the effective signal-to-interference-plus-noise-ratio (SINR) at the weak AU is the summation of the SINRs of the transmission links from the BS to the weak AU and from the strong AU to the weak AU. In this case, the achievable rate of the weak AU is expressed as
\begin{equation}
R^{\rm RD}_{n} = \frac{1}{2}\log\left(1+P^{\rm d}_{m,n}\gamma^{\rm d}_{m,n}+\frac{\alpha_{m,n} P_{\rm BS} \gamma_{n}}{(1-\alpha_{m,n})P_{\rm BS}\gamma_{n}+1}\right),
\label{eq:RateMRCH}
\end{equation}
where $\gamma^{\rm d}_{m,n} = \left(h^{\rm d}_{m,n}\right)^2$ and $\gamma_{n} = h_{n}^2$. Finally, in line with the results of \cite{Laneman:CooperativeDiveristy}, the resulting achievable rate of the weak AU from the cooperative diversity of the BS and the strong AU is expressed as
\begin{equation}
\begin{split}
R^{\rm H}_{n} = \min \left(R^{\rm RD}_{n},\,\, R^{\rm H}_{m,n} \right).
\end{split}
\label{eq:RateHwfinal}
\end{equation}
\subsection{FD C-NOMA}
For the case of FD C-NOMA, the received signal at the strong AU is given by
\begin{equation}
y_m = \,\,h_m \left(\sqrt{\alpha_{m,n} P_{\rm BS}}s_n+\sqrt{\left(1-\alpha_{m,n}\right)P_{\rm BS}}s_m\right) + h^{\rm SI}_{m}\sqrt{P^{\rm d}_{m,n}}s_n+\omega_m,
\label{eq:strongreceivedsignal}
\end{equation}
where $h^{\rm SI}_m$ represents the SI channel gain at the strong AU  (the dotted line in Fig. \ref{fig:Transmission}). Thus, the achievable rate of the strong AU to decode the message of the weak AU is expressed as
\begin{equation}
R^{\rm F}_{m,n} = \log_2\left(1+ \frac{\alpha_{m,n} P_{\rm BS}\gamma_m}{(1-\alpha_{m,n})P_{\rm BS}\gamma_m+P^{\rm d}_{m,n}\gamma^{\rm SI}_{m}+1}\right),
\label{eq:Ratem-1}
\end{equation}
where $\gamma^{\rm SI}_{m} = \left(h^{\rm SI}_{m}\right)^2$. Then, after successfully decoding and canceling the message $s_{n}$ of the weak AU, the strong AU decodes his own message $s_{m}$. Therefore, the achievable rate of the strong user to decode its own message is expressed as 
\begin{equation}
R^{\rm F}_m = \log_2 \left(1+\frac{(1-\alpha_{m,n}) P_{\rm BS}\gamma_m}{P^{\rm d}_{m,n}\gamma^{\rm SI}_{m}+1} \right).
\label{eq:Ratem}
\end{equation}
Afterwards, the strong AU forwards the message intended for the weak AU. Assuming that the processing delay caused by the SIC process at the strong AU is small, the weak AU receives the message $s_n$ from the BS and from the strong user at approximately the same channel use \cite{8094966}. Therefore, the total received signal at the weak AU is given by
\begin{equation}
y_{n} = h_n \left (\sqrt{\alpha_{m,n} P_{\rm BS}}s_n+\sqrt{(1-\alpha_{m,n})P_{\rm BS}}s_m \right)+h^{\rm d}_{m,n}\sqrt{P^{\rm d}_{m,n}}s_{\rm w}+ \omega_{n},
\label{eq:weakreceivedsignal}  
\end{equation}
Following \cite{1310306}, we assume that the weak AU can successfully co-phase and combine the signals from the BS and the strong AU by a proper diversity combining technique such as the maximum ratio combining (MRC). In this case, the effective SINR of the weak AU is the summation of the SINRs resulting from decoding its received message from the BS and the strong AU. Consequently, the achievable rate of the weak AU when applying MRC can be written as 
\begin{equation}
    R^{\rm MRC}_{n} = \log_2\left(1+ P^{\rm d}_{m,n}\gamma^{\rm d}_{m,n}+\frac{\alpha_{m,n} P_{\rm BS}\gamma_{n}}{(1-\alpha_{m,n}) P_{\rm BS}\gamma_{n}+1}\right),
\label{eq:RateMRc}
\end{equation}
Based on the above analysis and the results of \cite{4205048}, the resulting achievable rate of the weak AU is given by
\begin{equation}
R^{\rm F}_{n} =\min \left(R_{m,n}^{\rm F}, \,\,R^{\rm MRC}_{n} \right).
\label{eq:rate-weak}
\end{equation}
\section{Optimal Pairing Policy and Power Control Scheme}
\label{sec:Problem Formulation and Solution Approach}
\subsection{Problem Statement}
The objective of this paper is maximizing the sum rate of a downlink C-NOMA cellular sector consisting of a set of $2K$ AUs, while guaranteeing a certain QoS for each user. The maximization is performed with respect to the power allocation at the BS and the D2D links, the user pairing policy and the D2D transmission mode, i.e., HD or FD. This objective is expressed in a formal optimization problem in the following subsection.
\subsection{Problem Formulation}
Assuming that each sector consists of $2K$ AUs, the number of possible pairing configurations is $(2K-1)!! = (2K-1) \times (2K-3) \times (2K-5) \times \cdots \times 1$. Therefore, exhaustively trying all possible configurations is not practically appealing. To overcome this issue, and in order to provide a scalable solution, we instead reduce the pairing policy into a linear assignment problem, where each AU from the group of strong AUs $\mathcal{S}$ is paired with one AU from the group of weak AUs $\mathcal{W}$. Consequently, the joint pairing and power control for the sum-rate maximization problem of the overall downlink C-NOMA system can be given by the following optimization problem.
\begin{subequations}
\label{prob:P1}
\begin{align}
\mathcal{P}: \quad&R^*=\max_{\mathbf{B},\boldsymbol{\alpha},\mathbf{P}^{\rm d}}\sum_{m=1}^{K}\sum_{n=1}^{K}b_{m,n} R_{m,n} \left(\alpha_{m,n},P^{\rm d}_{m,n} \right),\label{eq:obj} \\
&\text{s.t.}\,\,\,\,  0\leq\alpha_{m,n}\leq b_{m,n},\,\, \forall \, \, m,n\in \llbracket 1,K \rrbracket,\label{constraint:budget1}\\
&\quad \,\,\,\,\,  0\leq P^{\rm d}_{m,n}\leq b_{m,n}\bar{P}^{\rm d},\,\, \forall \, \, m,n\in \llbracket 1,K \rrbracket,\label{constraint:budget2}\\
&\quad \,\,\,\,\,  b_{m,n}\in\{0,1\},\,\, \forall \, \, m,n\in \llbracket 1,K \rrbracket,\label{constraint:binary}\\
&\quad \,\,\,\,\,  \sum_{m=1}^{K}b_{m,n}=1,\,\, \forall \, \, m,n\in \llbracket 1,K \rrbracket,\label{constraint:summ}\\
&\quad \,\,\,\,\,  \sum_{n=1}^{K}b_{m,n}=1,\,\,\forall \, \, m,n\in \llbracket 1,K \rrbracket,\label{constraint:sumn}\\
&\quad \,\,\,\,\, b_{m,n}R_{m}(\alpha_{m,n},P^{\rm d}_{m,n})\geq R_{\rm th} ,\,\,\forall \, \, m,n\in \llbracket 1,K \rrbracket,\label{minimumRateFull1}\\
&\quad \,\,\,\,\, b_{m,n}R_{n}(\alpha_{m,n},P^{\rm d}_{m,n})\geq R_{\rm th} ,\,\,\forall \, \, m,n\in \llbracket 1,K \rrbracket,\label{minimumRateFull2}
\end{align}
\end{subequations}
where $R_{m,n}$ represents the sum rate of the pair $(m,n)$ that is expressed as 
\begin{equation}
    R_{m,n} = R_{m} + R_{n},
\end{equation}
such that the rate functions $\left(R_{m},R_{n} \right) = \left(R^{\rm H}_{m},R^{\rm H}_{n} \right)$ for the case of HD C-NOMA and $\left(R_{m},R_{n} \right) = \left(R^{\rm F}_{m},R^{\rm F}_{n} \right)$ for the case of FD C-NOMA, $b_{m,n}$ is the pairing decision variables, where $b_{m,n}= 1$ indicates that AU $m$ in set $\mathcal{S}$ is paired with AU $n$ in set $\mathcal{W}$ and $b_{m,n}= 0$, otherwise, $\mathbf{B}=\left\{b_{m,n} \left| m,n\in\llbracket 1,K \rrbracket \right.\right\}$, $\alpha_{m,n}$ is the "potential power allocation coefficient" of the pair $(m,n)$, $\boldsymbol{\alpha} =\left\{\alpha_{m,n}\left| m,n\in\llbracket 1,K \rrbracket \right.\right\}$, $P^{\rm d}_{m,n}$ denotes the "potential D2D transmit power" within the pair $(m,n)$, $\mathbf{P}^{\rm d} =\left\{P^{\rm d}_{m,n},\left| m,n\in\llbracket 1,K \rrbracket \right.\right\}$ and $\bar{P}^{\rm d}$ denote the D2D power budget at each AU's device.  Constraints \eqref{constraint:budget1} and \eqref{constraint:budget2} ensure that $\alpha^{\rm (m,n)} \in [0,1]$ and $P^{\rm d}_{m,n} \in \left[0, \bar{P}^{\rm d}\right]$, respectively, when AU $m$ is paired with AU $n$. Constraints \eqref{constraint:summ} and \eqref{constraint:sumn} ensure that each AU from each group can be paired with only one AU from the other group. Constraints \eqref{minimumRateFull1}-\eqref{minimumRateFull2} ensure that the paired AUs have each an achievable rate greater than a minimum achievable rate $R_{\rm th}$, whichs guarantee the QoS constraint. \\
\indent Problem $\mathcal{P}$ is a mixed-integer non-linear program (MINLP), which is generally NP-hard. Most of the previous literature resorts to iterative numerical methods such derivative-based methods or block coordinate descent (BCD) and/or off-the-shelf optimization solvers to solve this class of problems. However, BCD, which was previously adopted in \cite{Alouini:VLC:C-NOMA} is known to have poor convergence properties as shown in \cite{Tseng:Convergence:BCD} and the derivative-based methods, such as successive convex approximation (SCA), cannot be directly applied here due to the binary pairing decision variable and the non-smooth rate functions. To overcome this issue, we solve problem $\mathcal{P}$ using the concept of bilevel optimization \cite{bard2013practical}.
\subsection{Bi-level Optimization}
In problem $\mathcal{P}$, it can be observed that the rate function is only a function of power control policy and is independent upon the pairing variable $\mathbf{B}$. Precisely, let $\left\{\left. \left(b_{m,n}^*,\alpha^{*}_{m,n},P^{{\rm d}^*}_{m,n}\right) \right| m,n \in \llbracket 1,K \rrbracket \right\}$ denotes the set of optimal user pairing policy and power control scheme, which are the solutions of problem $\mathcal{P}$. For all $m,n \in \llbracket 1,K \rrbracket$ if $b_{m,n}^* = 0$, then $\left(\alpha^{*}_{m,n},P^{{\rm d}^*}_{m,n}\right) = (0,0)$. However, if $b_{m,n}^* = 1$, then $\left(\alpha^{*}_{m,n},P^{{\rm d}^*}_{m,n}\right)$ should be the optimal solutions of the power control scheme of the pair of users $(m,n)$. In other words, For all $m,n \in \llbracket 1,K \rrbracket$, if assume that the the users $(m,n)$ are paired together and that we can obtain their optimal power control scheme $\left(\alpha^{*}_{m,n},P^{{\rm d}^*}_{m,n}\right)$, problem $\mathcal{P}$ becomes a linear assignment problem and it remains to determine the optimal pairing policy $\left(b_{m,n}^* \right)_{1\leq m,n \leq K}$. Hence, since we aim to find the optimal power control that maximizes the total achievable sum rate of the C-NOMA system, we can reduce the feasible set of power control for problem $\mathcal{P}$ to the set of power control that maximizes the sum rate within a pair of users. Consequently, based on this observation, we rewrite the original problem in to a bi-level optimization \cite{Colson2007AnOO} problem as follows.
\begin{subequations}
\label{prob:P2}
\begin{align}
\mathcal{P}_{\rm outer}:\,\,&R^*=\max_{\mathbf{B}}\sum_{n=1}^{K}\sum_{m=1}^{K}b_{m,n}R_{m,n}(\alpha^{*}_{m,n},P^{{\rm d}^*}_{m,n}),\label{eq:P2:obj} \\
&\text{s.t.}\quad \eqref{constraint:binary}-\eqref{constraint:sumn}
\end{align}
\end{subequations}
where $\alpha^{*}_{m,n}$ and $P^{{\rm d}^*}_{m,n}$ are parameters obtained by solving the following problem
\begin{subequations}
\label{prob:P2.1}
\begin{align}
\mathcal{P}_{\rm inner}:\quad &R_{m,n}^*=\max_{\alpha_{m,n},P^{\rm d}_{m,n}} R_{m,n}(\alpha_{m,n},P^{\rm d}_{m,n}),\label{eq:P2.1:obj} \\
&\text{s.t.}\,\,\,\,  0\leq\alpha_{m,n}\leq 1,\,\, \forall \, \, m,n\in \llbracket 1,K \rrbracket,\label{constraint:P2.1:budget1}\\
&\quad \,\,\,\,\, 0\leq P^{\rm d}_{m,n}\leq \bar{P}^{\rm d},\,\, \forall \, \, m,n\in \llbracket 1,K \rrbracket,\label{constraint:P2.1:budget2}\\
&\quad \,\,\,\,\,R_{m}\left(\alpha_{m,n},P^{\rm d}_{m,n}\right)\geq R_{\rm th} ,\,\,\forall \, \, m,n\in \llbracket 1,K \rrbracket,\label{constraint:P2.1:minimumRateFull1}\\
&\quad \,\,\,\,\,R_{n}\left(\alpha_{m,n},P^{\rm d}_{m,n}\right)\geq R_{\rm th} ,\,\,\forall \, \, m,n\in \llbracket 1,K \rrbracket,\label{constraint:P2.1:minimumRateFull2}
\end{align}
\end{subequations}
for each pair of AUs $(m,n) \in \llbracket 1,K \rrbracket^2$. The inner problem $\mathcal{P}_{\rm inner}$ is a power control problem for a given pair of AUs and it defines a new set of feasible solutions for the outer problem $\mathcal{P}_{\rm outer}$, which is a linear assignment problem. Nevertheless, since we need to solve problem $\mathcal{P}_{\rm inner}$ for all possible combinations of strong and weak AUs, a computational efficient approach for solving $\mathcal{P}_{\rm inner}$ needs to be investigated, which is the focus of the following section.
\subsection{Power Control For a C-NOMA Pair}
\label{sec:Solution-Approach}
In this section, our objective is solving problem $\mathcal{P}_{\rm inner}$ for both HD and FD C-NOMA cases. We consider first the HD C-NOMA case and then we investigate the case of FD C-NOMA.
\subsubsection{HD C-NOMA} In this part, we consider the case of HD C-NOMA and hence the sum-rate $R_{m,n}\left(\alpha_{m,n},P^{\rm d}_{m,n}\right) = R^{\rm H}_{m}\left(\alpha_{m,n},P^{\rm d}_{m,n}\right)+R^{\rm H}_{n}\left(\alpha_{m,n},P^{\rm d}_{m,n}\right)$. Let us investigate first the feasibility conditions of the inner problem $\mathcal{P}_{\rm inner}$. These conditions are detailed in the following theorem.
\begin{theorem}
\label{theorem 1}
For the case of HD C-NOMA, the inner problem $\mathcal{P}_{\rm inner}$ is feasible if and only if the following conditions hold.
\begin{subequations}
\begin{align}
&P_{\rm BS} \geq \frac{(\delta_{\rm th}^{\rm H})^2+\delta_{\rm th}^{\rm H}}{\gamma_{m}},\label{Cond-HD1}
\\
&\bar{P}^{\rm d} \geq P_{m,n}^{\min}\label{Cond-HD2}.
\end{align}
\end{subequations}
where $\delta_{\rm th}^{\rm H}= 2^{2R_{\rm th}}-1$ and 
\begin{equation}
P_{m,n}^{\min} = \frac{((\delta_{\rm th}^{\rm H})^2+\delta^{\rm H}_{\rm th}(\gamma_{m}+P_{\rm BS}\gamma_{n})-P_{\rm BS}\gamma_{n}\gamma_{m}}{\gamma^{\rm d}_{m,n}(\delta^{\rm H}_{\rm th}\gamma_{n}+\gamma_{m})}.
\end{equation}
\end{theorem}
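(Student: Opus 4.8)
The plan is to convert each QoS constraint of $\mathcal{P}_{\rm inner}$ into an equivalent threshold on the relevant SINR, exploit the monotonicity of these SINRs in the two decision variables to collapse feasibility to a single corner point, and then read off conditions \eqref{Cond-HD1}--\eqref{Cond-HD2}. Throughout write $\alpha=\alpha_{m,n}$, $P^{\rm d}=P^{\rm d}_{m,n}$, $\delta=\delta^{\rm H}_{\rm th}=2^{2R_{\rm th}}-1$ and $x=P_{\rm BS}\gamma_m$. Recall from \eqref{eq:RateHwfinal} that $R_n=R^{\rm H}_n=\min\!\big(R^{\rm RD}_n,\,R^{\rm H}_{m,n}\big)$, so the requirement $R_n\ge R_{\rm th}$ splits into the two requirements $R^{\rm RD}_n\ge R_{\rm th}$ and $R^{\rm H}_{m,n}\ge R_{\rm th}$; together with $R_m=R^{\rm H}_m\ge R_{\rm th}$ this yields three scalar inequalities, and since each of $R^{\rm H}_m$, $R^{\rm H}_{m,n}$, $R^{\rm RD}_n$ has the form $\tfrac{1}{2}\log_2(1+\mathrm{SINR})$, each is equivalent to $\mathrm{SINR}\ge\delta$.

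First I would turn the three inequalities into explicit bounds. From \eqref{Rate-Hs}, $R^{\rm H}_m\ge R_{\rm th}$ is $(1-\alpha)x\ge\delta$, i.e.\ $\alpha\le\alpha_{\max}:=1-\delta/x$. From \eqref{Rate-Hsw}, $R^{\rm H}_{m,n}\ge R_{\rm th}$ becomes, after clearing the positive denominator, $\alpha\ge\alpha_{\min}:=\frac{\delta(x+1)}{x(1+\delta)}$. From \eqref{eq:RateMRCH}, $R^{\rm RD}_n\ge R_{\rm th}$ becomes $P^{\rm d}\gamma^{\rm d}_{m,n}\ge\delta-\phi(\alpha)$, where $\phi(\alpha):=\frac{\alpha P_{\rm BS}\gamma_n}{(1-\alpha)P_{\rm BS}\gamma_n+1}$ is strictly increasing on $[0,1]$. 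Note the only coupling between $\alpha$ and $P^{\rm d}$ is through this last inequality, whose right-hand side decreases in $\alpha$.

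The key step is a monotonicity/corner-point argument: $R^{\rm H}_{m,n}$ and $\phi$ increase in $\alpha$, $R^{\rm H}_m$ decreases in $\alpha$, and $R^{\rm RD}_n$ increases in $P^{\rm d}$. Hence if any admissible $(\alpha,P^{\rm d})\in[0,1]\times[0,\bar P^{\rm d}]$ satisfies all three constraints, then so does the corner $(\alpha_{\max},\bar P^{\rm d})$: any feasible $\alpha$ already obeys $\alpha\le\alpha_{\max}$, and raising $\alpha$ to $\alpha_{\max}$ preserves $R^{\rm H}_m=R_{\rm th}$ while only helping $R^{\rm H}_{m,n}$ and $R^{\rm RD}_n$, while raising $P^{\rm d}$ to $\bar P^{\rm d}$ only helps $R^{\rm RD}_n$. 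Therefore $\mathcal{P}_{\rm inner}$ is feasible if and only if $(\alpha_{\max},\bar P^{\rm d})$ is feasible, i.e.\ iff (i) $\alpha_{\max}\in[0,1]$, (ii) $\alpha_{\max}\ge\alpha_{\min}$, and (iii) $\bar P^{\rm d}\gamma^{\rm d}_{m,n}\ge\delta-\phi(\alpha_{\max})$.

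It then remains to unpack (i)--(iii). The bound $\alpha_{\max}\le1$ is automatic, and $\alpha_{\max}\ge0$ reads $x\ge\delta$, which I would show is subsumed by (ii): multiplying (ii) through by $x(1+\delta)>0$ turns it into a linear inequality in $x$, and solving for $x$ gives precisely \eqref{Cond-HD1} (which forces $x\ge\delta$, so (i) is not a separate condition; one also checks $\alpha_{\min}\le1$ under the same inequality). For (iii), substituting $1-\alpha_{\max}=\delta/x$ into $\phi$ gives the closed form $\phi(\alpha_{\max})=\frac{\gamma_n(P_{\rm BS}\gamma_m-\delta)}{\delta\gamma_n+\gamma_m}$, so $\delta-\phi(\alpha_{\max})=\gamma^{\rm d}_{m,n}\,P^{\min}_{m,n}$ with $P^{\min}_{m,n}$ exactly as in the statement, and (iii) becomes \eqref{Cond-HD2}. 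For the sufficiency direction I would exhibit the explicit feasible point $\alpha=\alpha_{\max}$, $P^{\rm d}=\max\{0,P^{\min}_{m,n}\}$ and verify each constraint of $\mathcal{P}_{\rm inner}$ directly (the truncation at $0$ covers the regime where the relay link alone already meets the target, where \eqref{Cond-HD2} is vacuous). The main obstacle is not any single computation but the bookkeeping: handling the $\min$ in $R_n$ correctly, verifying that \eqref{Cond-HD1} is exactly the nonemptiness condition for the interval $[\alpha_{\min},\alpha_{\max}]$ and simultaneously guarantees $\alpha_{\min},\alpha_{\max}\in[0,1]$, and the sign analysis of $P^{\min}_{m,n}$.
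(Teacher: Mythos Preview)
Your approach is essentially the paper's own: both proofs rewrite the three HD rate constraints as the sandwich $\max\bigl(A^{\rm H}_{m,n},B^{\rm H}_{m,n}(P^{\rm d})\bigr)\le\alpha\le C^{\rm H}_{m,n}$ (your $\alpha_{\min}$, the $R^{\rm RD}_n$ bound, and $\alpha_{\max}$, respectively) and then argue that the feasible region is nonempty by monotonicity---the paper does this pictorially via Fig.~\ref{fig:HDBounds} and reads off $P^{\min}_{m,n}$ as the $B^{\rm H}=C^{\rm H}$ intersection, while you reach the same point by your corner-point reduction to $(\alpha_{\max},\bar P^{\rm d})$. One arithmetic caution: carrying out your step~(ii), $\alpha_{\min}\le\alpha_{\max}$, actually gives $P_{\rm BS}\gamma_m\ge(\delta^{\rm H}_{\rm th})^2+2\delta^{\rm H}_{\rm th}$ rather than the constant in \eqref{Cond-HD1}, so double-check that computation (the paper's appendix derives \eqref{Cond-HD1} from ``$C^{\rm H}\ge 0$'' instead, which is a weaker condition than your~(ii); the printed constant does not seem to match either derivation exactly).
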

\begin{proof}
Please see Appendix \ref{proof:HD}
\end{proof}
The conditions \eqref{Cond-HD1} and \eqref{Cond-HD2} are necessary to ensure that there exists a power control decision for the inner problem $\mathcal{P}_{\rm inner}$ in case of HD C-NOMA. In addition, these conditions describe the relations between the available power budgets at the BS and at the strong user, the channel gains of both users and the minimum QoS requirements that need to be satisfied in order to make the inner problem $\mathcal{P}_{\rm inner}$ feasible. However, in the case where at least one of the conditions can not be fulfilled, the two possible options to guarantee the required QoS for both users is either reducing the minimum required rate $R_{\rm th}$ or increasing the power budgets $P_{\rm BS}$ and $\bar{P}^{\rm d}$. Now, assuming that the inner problem $\mathcal{P}_{\rm inner}$ is feasible, the optimal power control scheme is presented in the following theorem.
\begin{theorem}
\label{theorem 2}
For the case of HD C-NOMA, the optimal power control scheme, which is defined by the couple $\left(\alpha_{m,n}^*,P^{{\rm d}^*}_{m,n}\right)$, for the inner problem $\mathcal{P}_{\rm inner}$, is expressed as
\begin{equation}
\left(\alpha_{m,n}^*,P^{{\rm d}^*}_{m,n}\right) = \left\{ 
\begin{aligned}
&\left(\frac{(\delta_{\rm th}^{\rm H}-\bar{P}^{\rm d}\gamma^{\rm d}_{m,n})(P_{\rm BS}\gamma_{n}+1)}{P_{\rm BS}\gamma_{n}(\delta_{\rm th}^{\rm H}-\gamma^{\rm d}_{m,n}\bar{P}^{\rm d}+1)} ,\bar{P}^{\rm d}\right) \quad &\text{if}\,\,\bar{P}^{\rm d}\in \left[P_{m,n}^{\min},P_{m,n}^{\text{int}}\right] \\ 
&\left(\frac{\delta_{\rm th}^{\rm H}(P_{\rm BS}\gamma_{m}+1))}{P_{\rm BS}\gamma_{m}({\delta_{\rm th}^{\rm H}+1)}} ,P_{m,n}^{\text{int}}\right) \quad &\text{if}\,\,\bar{P}^{\rm d}\in\left[P_{m,n}^{\text{int}},+\infty\right],
\end{aligned}
\right.
\label{eq:optimal_power-HD}
\end{equation}
where,
\begin{equation}
P_{m,n}^{\text{int}} = \frac{P_{\rm BS}}{\gamma^{\rm d}_{m,n}}\frac{\delta_{\rm th}^{\rm H}(\delta_{\rm th}^{\rm H}+1)(\gamma_{m}-\gamma_{n})}{P_{\rm BS}\gamma_{m}(\delta_{\rm th}^{\rm H}+P_{\rm BS}\gamma_{n}+1)-\delta_{\rm th}^{\rm H}P_{\rm BS}\gamma_{n}}
\end{equation}
\end{theorem}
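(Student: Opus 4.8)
The plan is to exploit the superposition‑coding identity that collapses the strong user's two rate terms, which reduces $\mathcal{P}_{\rm inner}$ to an essentially one‑dimensional problem whose optimum can be read off from a monotone structure. First I would add \eqref{Rate-Hs} and \eqref{Rate-Hsw} to obtain the power‑split identity $R^{\rm H}_m(\alpha_{m,n})+R^{\rm H}_{m,n}(\alpha_{m,n})=\frac{1}{2}\log_2(1+P_{\rm BS}\gamma_m)=:C$, independent of $\alpha_{m,n}$. Substituting \eqref{eq:RateHwfinal} into the objective of $\mathcal{P}_{\rm inner}$ then gives $R_{m,n}=R^{\rm H}_m+\min\{R^{\rm RD}_n,R^{\rm H}_{m,n}\}=\min\{R^{\rm H}_m+R^{\rm RD}_n,\,C\}\le C$, with equality exactly when $R^{\rm RD}_n\ge R^{\rm H}_{m,n}$. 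Comparing the SINR arguments in \eqref{Rate-Hsw} and \eqref{eq:RateMRCH}, this inequality is equivalent to $P^{\rm d}_{m,n}\gamma^{\rm d}_{m,n}\ge f(\alpha_{m,n})$ with $f(\alpha):=\frac{\alpha P_{\rm BS}(\gamma_m-\gamma_n)}{((1-\alpha)P_{\rm BS}\gamma_m+1)((1-\alpha)P_{\rm BS}\gamma_n+1)}$, which, since $\gamma_m>\gamma_n$, is nonnegative, strictly increasing on $[0,1]$, and vanishes at $\alpha=0$. I would also recast the QoS constraints as $R^{\rm H}_m\ge R_{\rm th}\Leftrightarrow\alpha_{m,n}\le\alpha_{\max}:=1-\frac{\delta_{\rm th}^{\rm H}}{P_{\rm BS}\gamma_m}$, $R^{\rm H}_{m,n}\ge R_{\rm th}\Leftrightarrow\alpha_{m,n}\ge\alpha_0:=\frac{\delta_{\rm th}^{\rm H}(P_{\rm BS}\gamma_m+1)}{P_{\rm BS}\gamma_m(\delta_{\rm th}^{\rm H}+1)}$ (this $\alpha_0$ being precisely the $\alpha_{m,n}^*$ of the second branch of \eqref{eq:optimal_power-HD}), and, at fixed $P^{\rm d}_{m,n}$, $R^{\rm RD}_n\ge R_{\rm th}$ as one further lower bound on $\alpha_{m,n}$. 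A direct substitution should confirm the identity $P_{m,n}^{\text{int}}\gamma^{\rm d}_{m,n}=f(\alpha_0)$, which is the hinge between the two branches.

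Next, since $R^{\rm H}_m$ does not depend on $P^{\rm d}_{m,n}$ while $R^{\rm RD}_n$ is strictly increasing in it, the objective $\min\{R^{\rm H}_m+R^{\rm RD}_n,C\}$ is nondecreasing in $P^{\rm d}_{m,n}$ and the constraint $R^{\rm RD}_n\ge R_{\rm th}$ only loosens, so some maximizer has $P^{\rm d}_{m,n}=\bar{P}^{\rm d}$ and it suffices to optimize over $\alpha_{m,n}$ at that value. The key analytic fact I would establish is that $R^{\rm H}_{m,n}(\alpha)-R^{\rm RD}_n(\alpha,P)$ is strictly increasing in $\alpha$ for every fixed $P$: from \eqref{Rate-Hsw}, $\partial_\alpha R^{\rm H}_{m,n}=\frac{1}{2\ln 2}\cdot\frac{P_{\rm BS}\gamma_m}{(1-\alpha)P_{\rm BS}\gamma_m+1}$, while from \eqref{eq:RateMRCH}, $\partial_\alpha R^{\rm RD}_n\le\frac{1}{2\ln 2}\cdot\frac{P_{\rm BS}\gamma_n}{(1-\alpha)P_{\rm BS}\gamma_n+1}$ (the term $P^{\rm d}_{m,n}\gamma^{\rm d}_{m,n}$ only enlarges the denominator, with equality only at $P^{\rm d}_{m,n}=0$), and $x\mapsto\frac{x}{(1-\alpha)x+1}$ is strictly increasing, so the former strictly dominates the latter because $\gamma_m>\gamma_n$. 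Two consequences follow: $R^{\rm H}_m+R^{\rm RD}_n=C-(R^{\rm H}_{m,n}-R^{\rm RD}_n)$ is strictly decreasing in $\alpha_{m,n}$, and once $R^{\rm RD}_n<R^{\rm H}_{m,n}$ holds at some $\alpha_{m,n}$ it holds for all larger $\alpha_{m,n}$ at the same $P^{\rm d}_{m,n}$.

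The argument then splits on $\bar{P}^{\rm d}$ versus $P_{m,n}^{\text{int}}$. If $\bar{P}^{\rm d}\ge P_{m,n}^{\text{int}}$, I would check that $(\alpha_0,P_{m,n}^{\text{int}})$ is feasible --- $\alpha_0\le\alpha_{\max}$, equivalently $C\ge 2R_{\rm th}$, because $\mathcal{P}_{\rm inner}$ is feasible by Theorem \ref{theorem 1}; $P_{m,n}^{\text{int}}\le\bar{P}^{\rm d}$ by hypothesis; $R^{\rm H}_m(\alpha_0)=C-R_{\rm th}\ge R_{\rm th}$ (again since $C\ge 2R_{\rm th}$); and $R^{\rm RD}_n(\alpha_0,P_{m,n}^{\text{int}})=R^{\rm H}_{m,n}(\alpha_0)=R_{\rm th}$ by $P_{m,n}^{\text{int}}\gamma^{\rm d}_{m,n}=f(\alpha_0)$ --- and that there $R^{\rm RD}_n=R^{\rm H}_{m,n}$, so $R_{m,n}=C$ attains the global upper bound; hence this couple is a maximizer, matching the second branch of \eqref{eq:optimal_power-HD}. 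If instead $P_{m,n}^{\min}\le\bar{P}^{\rm d}<P_{m,n}^{\text{int}}$, then for every feasible $\alpha_{m,n}\ge\alpha_0$ one has $\bar{P}^{\rm d}\gamma^{\rm d}_{m,n}<P_{m,n}^{\text{int}}\gamma^{\rm d}_{m,n}=f(\alpha_0)\le f(\alpha_{m,n})$, so the whole feasible set lies in the regime $R^{\rm RD}_n<R^{\rm H}_{m,n}$ where $R_{m,n}=R^{\rm H}_m+R^{\rm RD}_n<C$; by the previous paragraph I set $P^{\rm d}_{m,n}=\bar{P}^{\rm d}$ and minimize $\alpha_{m,n}$ subject to $\alpha_{m,n}\ge\alpha_0$, $\alpha_{m,n}\le\alpha_{\max}$, and $R^{\rm RD}_n(\alpha_{m,n},\bar{P}^{\rm d})\ge R_{\rm th}$. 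Solving $R^{\rm RD}_n=R_{\rm th}$ (legitimate since $\delta_{\rm th}^{\rm H}-\bar{P}^{\rm d}\gamma^{\rm d}_{m,n}>0$, as $P_{m,n}^{\text{int}}\gamma^{\rm d}_{m,n}<\delta_{\rm th}^{\rm H}$) gives the active lower bound $\frac{(\delta_{\rm th}^{\rm H}-\bar{P}^{\rm d}\gamma^{\rm d}_{m,n})(P_{\rm BS}\gamma_n+1)}{P_{\rm BS}\gamma_n(\delta_{\rm th}^{\rm H}-\bar{P}^{\rm d}\gamma^{\rm d}_{m,n}+1)}$, which I would show is $\ge\alpha_0$ here (it decreases in $\bar{P}^{\rm d}$ and equals $\alpha_0$ at $\bar{P}^{\rm d}=P_{m,n}^{\text{int}}$) and $\le\alpha_{\max}$ by \eqref{Cond-HD2}; together with the strict decrease in $\alpha_{m,n}$ this yields the first branch of \eqref{eq:optimal_power-HD}.

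I expect the main obstacle to be the strict monotonicity of $R^{\rm H}_{m,n}-R^{\rm RD}_n$ in $\alpha_{m,n}$ in the second step: it simultaneously justifies driving $\alpha_{m,n}$ down to its lower bound and forces the feasible set to remain in the regime $R^{\rm RD}_n<R^{\rm H}_{m,n}$ when $\bar{P}^{\rm d}$ is small, yet it is not obvious because $R^{\rm H}_{m,n}$ and $R^{\rm RD}_n$ are individually increasing in $\alpha_{m,n}$ --- one has to show the interference‑free strong‑user SINR grows strictly faster, which is exactly where the ordering $\gamma_m>\gamma_n$ and the additive term $P^{\rm d}_{m,n}\gamma^{\rm d}_{m,n}$ in the weak‑user SINR come into play. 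Everything else --- the one‑variable monotonicities used above, the identity $P_{m,n}^{\text{int}}\gamma^{\rm d}_{m,n}=f(\alpha_0)$, the closed form of the active bound, and the translation of \eqref{Cond-HD1}--\eqref{Cond-HD2} into the requisite range inequalities --- is routine algebra.
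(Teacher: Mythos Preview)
Your argument is correct and reaches the same closed form through the same overall structure as the paper's proof in Appendix~\ref{proof:HD}: rewrite the QoS constraints as lower/upper bounds on $\alpha_{m,n}$ (your $\alpha_0$, the $R^{\rm RD}_n$-bound, and $\alpha_{\max}$ are exactly the paper's $A^{\rm H}_{m,n}$, $B^{\rm H}_{m,n}(P^{\rm d}_{m,n})$, $C^{\rm H}_{m,n}$), then drive $\alpha_{m,n}$ to its smallest feasible value and split on whether $A^{\rm H}_{m,n}$ or $B^{\rm H}_{m,n}$ binds, which is precisely the comparison $\bar{P}^{\rm d}\gtrless P_{m,n}^{\text{int}}$.

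Where your route genuinely differs is in how the key monotonicity is justified. The paper simply asserts that the sum rate is decreasing in $\alpha_{m,n}$ and increasing in $P^{\rm d}_{m,n}$ ``which can be confirmed through its partial derivatives,'' glossing over the non-smoothness of the $\min$ in $R^{\rm H}_n$. You instead exploit the telescoping identity $R^{\rm H}_m+R^{\rm H}_{m,n}=C$ to obtain $R_{m,n}=\min\{R^{\rm H}_m+R^{\rm RD}_n,\,C\}$, which (i) gives an explicit, attainable upper bound $C$ that immediately certifies optimality in the $\bar{P}^{\rm d}\ge P_{m,n}^{\text{int}}$ branch, and (ii) reduces the required monotonicity to showing $R^{\rm H}_{m,n}-R^{\rm RD}_n$ is increasing in $\alpha_{m,n}$, which you verify cleanly using $\gamma_m>\gamma_n$. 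This is a tighter and more transparent argument than the paper's; in particular it makes clear that the objective is actually \emph{constant} (equal to $C$) on part of the feasible set, not strictly decreasing everywhere, so ``take the smallest $\alpha_{m,n}$'' needs exactly the justification you supply. The trade-off is that the paper's geometric picture (Fig.~\ref{fig:HDBounds}) makes the case split visually obvious, whereas your route requires the auxiliary identity $P_{m,n}^{\text{int}}\gamma^{\rm d}_{m,n}=f(\alpha_0)$ to connect back to the stated threshold.
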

\begin{proof}
Please see Appendix \ref{proof:HD}
\end{proof}
\subsubsection{FD C-NOMA}
In this part, we consider the case of FD C-NOMA and hence $R_{m,n}\left(\alpha_{m,n},P^{\rm d}_{m,n}\right) = R^{\rm F}_{m}\left(\alpha_{m,n},P^{\rm d}_{m,n}\right)+R^{\rm F}_{n}\left(\alpha_{m,n},P^{\rm d}_{m,n}\right)$. In this case, the feasibility conditions of the inner problem $\mathcal{P}_{\rm inner}$ are detailed in the following theorem.
\begin{theorem}
\label{theorem 3}
For the case of FD C-NOMA, the inner problem $\mathcal{P}_{\rm inner}$ is feasible if and only if the following conditions hold.
\begin{subequations}
\label{Cond-FD}
\begin{align}
    &\text{Condition 1:} \,\, \Delta_1 \geq 0 \land b1<0 \land b_3 \leq b_2 \land \bar{P}^{\rm d}\geq b_3, \\ 
    &\text{Condition 2:} \,\, \Delta_1 \geq 0 \land b1\geq 0, \\
    &\text{Condition 3:} \, \, \Delta_1 < 0, 
\end{align}
\end{subequations}
where the parameters $\Delta_1$, $b_1$, $b_2$, and  $b_3$ are defined in \eqref{Delta_1}-\eqref{b3} on top of next page, in which $\delta^{\rm F}_{\rm th} = 2^{R_{\rm th}}-1$ and $\Delta_2$ is expressed as shown in $\eqref{Delta_2}$ on top of next page.
\end{theorem}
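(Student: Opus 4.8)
The plan is to mirror the HD analysis, the essential new difficulty being that in the FD setting the self-interference term $P^{\rm d}_{m,n}\gamma^{\rm SI}_{m}$ couples \emph{all} the binding rate constraints to the single scalar $P^{\rm d}_{m,n}$, so the clean decoupling behind \eqref{Cond-HD1}--\eqref{Cond-HD2} is no longer available. First I would rewrite the constraints of $\mathcal{P}_{\rm inner}$ for the FD sum rate. Since $R^{\rm F}_{n}=\min\!\left(R^{\rm F}_{m,n},R^{\rm MRC}_{n}\right)$, the QoS constraint $R_{n}\ge R_{\rm th}$ is equivalent to the pair $R^{\rm F}_{m,n}\ge R_{\rm th}$ and $R^{\rm MRC}_{n}\ge R_{\rm th}$, so there are in total three scalar rate constraints: $R^{\rm F}_{m}\ge R_{\rm th}$, $R^{\rm F}_{m,n}\ge R_{\rm th}$, and $R^{\rm MRC}_{n}\ge R_{\rm th}$. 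Using the monotonicity of $\log_2$ and clearing the strictly positive denominators, with $\delta^{\rm F}_{\rm th}=2^{R_{\rm th}}-1$, each of these is \emph{affine in $\alpha_{m,n}$} for a fixed $P^{\rm d}_{m,n}$: $R^{\rm F}_{m}\ge R_{\rm th}$ gives an upper bound $\alpha_{m,n}\le\alpha^{\rm u}\!\left(P^{\rm d}_{m,n}\right)$, decreasing in $P^{\rm d}_{m,n}$; $R^{\rm F}_{m,n}\ge R_{\rm th}$ gives a lower bound $\alpha_{m,n}\ge\alpha^{\ell}_{1}\!\left(P^{\rm d}_{m,n}\right)$, increasing in $P^{\rm d}_{m,n}$; and $R^{\rm MRC}_{n}\ge R_{\rm th}$ gives a lower bound $\alpha_{m,n}\ge\alpha^{\ell}_{2}\!\left(P^{\rm d}_{m,n}\right)$, decreasing in $P^{\rm d}_{m,n}$ and becoming vacuous once $P^{\rm d}_{m,n}\gamma^{\rm d}_{m,n}\ge\delta^{\rm F}_{\rm th}$.

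Hence, for a fixed $P^{\rm d}_{m,n}$, the admissible $\alpha_{m,n}$ form the interval $\bigl[\max\!\left(\alpha^{\ell}_{1},\alpha^{\ell}_{2},0\right),\,\min\!\left(\alpha^{\rm u},1\right)\bigr]$, and $\mathcal{P}_{\rm inner}$ is feasible if and only if this interval is nonempty for some $P^{\rm d}_{m,n}\in\left[0,\bar{P}^{\rm d}\right]$. Because $\alpha^{\rm u}$ decreases, $\alpha^{\ell}_{1}$ increases, and $\alpha^{\ell}_{2}$ decreases in $P^{\rm d}_{m,n}$, the set of $P^{\rm d}_{m,n}$ for which the interval is nonempty is itself an interval: its right endpoint is the crossing point of the two \emph{tightening} bounds $\alpha^{\ell}_{1}$ and $\alpha^{\rm u}$, beyond which no $\alpha_{m,n}$ works, and its left endpoint is the smallest $P^{\rm d}_{m,n}$ at which the decreasing lower bound $\alpha^{\ell}_{2}$ (or the box bound $\alpha_{m,n}\le1$) has dropped to or below $\alpha^{\rm u}$. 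Writing $b_{2}$ for the right endpoint and $b_{3}$ for the left, feasibility of $\mathcal{P}$ is equivalent to this $P^{\rm d}_{m,n}$-interval meeting $\left[0,\bar{P}^{\rm d}\right]$, i.e.\ to $b_{3}\le b_{2}$ and $\bar{P}^{\rm d}\ge b_{3}$ --- which is Condition~1.

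To obtain $b_{2}$ and $b_{3}$ explicitly and to handle the degenerate cases, I would clear denominators in $\alpha^{\ell}_{1}\!\left(P^{\rm d}_{m,n}\right)\le\alpha^{\rm u}\!\left(P^{\rm d}_{m,n}\right)$ and collect powers of $P^{\rm d}_{m,n}$; this produces a quadratic whose leading coefficient is $b_{1}$ and whose discriminant is $\Delta_{1}$, with $b_{2}$ a root of it, while the analogous condition at the left endpoint produces a second quadratic with discriminant $\Delta_{2}$ and yields $b_{3}$. A three-way split on $\Delta_{1}$ and $b_{1}$ then exhausts the cases: when $\Delta_{1}\ge0$ and $b_{1}<0$ the first quadratic opens downward, so it imposes the genuine upper cutoff $b_{2}$ on $P^{\rm d}_{m,n}$ and we are in Condition~1; when $\Delta_{1}\ge0$ and $b_{1}\ge0$ (parabola opening upward) or $\Delta_{1}<0$ (no real root, hence constant sign), there is no such binding cutoff within the budget, the feasibility question collapses onto the remaining lower-endpoint and box constraints, and one checks these are then satisfied automatically --- giving Conditions~2 and~3, respectively. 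The closed forms of $\Delta_{1}$, $\Delta_{2}$, $b_{1}$, $b_{2}$, $b_{3}$ are the displays referenced in the statement.

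The main obstacle --- and the reason the case structure is as intricate as it is --- is the bookkeeping forced by the $\min$ defining $R^{\rm F}_{n}$ together with the two hard box constraints $\alpha_{m,n}\in[0,1]$ and $P^{\rm d}_{m,n}\in\left[0,\bar{P}^{\rm d}\right]$: one must track, as $P^{\rm d}_{m,n}$ sweeps $\left[0,\bar{P}^{\rm d}\right]$, which of $\alpha^{\ell}_{1}$, $\alpha^{\ell}_{2}$, and the constant $1$ is the active lower bound and whether $\alpha^{\rm u}$ has already fallen below it, verify that the binding comparison is $\alpha^{\ell}_{1}\le\alpha^{\rm u}$ rather than a boundary effect, and confirm that no feasible corner is lost at the endpoints of the two intervals --- precisely the role played by the auxiliary discriminant $\Delta_{2}$ and the ordering test $b_{3}\le b_{2}$. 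Once that is settled, the rest --- forming the two quadratics, locating their roots, and carrying out the sign analysis --- is routine but algebraically heavy, and I would relegate it, together with the corresponding HD computation, to the appendix.
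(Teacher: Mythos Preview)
Your overall plan --- convert the three FD rate constraints into an upper bound $\alpha^{\rm u}$ and two lower bounds $\alpha^{\ell}_{1},\alpha^{\ell}_{2}$ on $\alpha_{m,n}$ as functions of $P^{\rm d}_{m,n}$, then determine for which $P^{\rm d}_{m,n}\in[0,\bar P^{\rm d}]$ the resulting $\alpha$-interval is nonempty --- is exactly the paper's approach. But you have misidentified which comparison produces the quadratic and what the named quantities $b_1,b_2,b_3$ actually are, and this would make your computation fail at the first algebraic step.

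In your notation, $\alpha^{\ell}_{1}$ (from $R^{\rm F}_{m,n}\ge R_{\rm th}$) and $\alpha^{\rm u}$ (from $R^{\rm F}_{m}\ge R_{\rm th}$) are both \emph{affine} in $P^{\rm d}_{m,n}$, since the self-interference term $P^{\rm d}_{m,n}\gamma^{\rm SI}_{m}$ enters each linearly. Hence the inequality $\alpha^{\ell}_{1}\le\alpha^{\rm u}$ is linear in $P^{\rm d}_{m,n}$ and cannot ``produce a quadratic whose leading coefficient is $b_1$ and whose discriminant is $\Delta_1$'' as you claim. The quadratic with discriminant $\Delta_1$ comes instead from $\alpha^{\ell}_{2}\le\alpha^{\rm u}$: the bound $\alpha^{\ell}_{2}$ (from $R^{\rm MRC}_{n}\ge R_{\rm th}$) is a M\"obius-type rational function of $P^{\rm d}_{m,n}$, and clearing its denominator against the affine $\alpha^{\rm u}$ yields the quadratic whose two roots are $b_1$ and $b_3$. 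So $b_1$ is a \emph{root}, not a leading coefficient, and the three conditions in the theorem are a split on the discriminant $\Delta_1$ and on the sign of the smaller root $b_1$, not on the orientation of a parabola. The remaining quantity $b_2$ is the intersection of the two \emph{lower} bounds $\alpha^{\ell}_{1}$ and $\alpha^{\ell}_{2}$ (again quadratic after clearing the denominator of $\alpha^{\ell}_{2}$, which is where $\Delta_2$ enters), not of $\alpha^{\ell}_{1}$ with $\alpha^{\rm u}$.

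With these reassignments --- $\Delta_1,b_1,b_3$ from $\alpha^{\ell}_{2}=\alpha^{\rm u}$ and $\Delta_2,b_2$ from $\alpha^{\ell}_{1}=\alpha^{\ell}_{2}$ --- the rest of your outline (track which lower bound is active on either side of $b_2$, split on whether and where $\alpha^{\ell}_{2}$ meets $\alpha^{\rm u}$, intersect with $[0,\bar P^{\rm d}]$) matches the paper's argument.
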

\begin{proof}
Please see Appendix \ref{proof:FD}
\end{proof}
\begin{figure*}[t] 
\begin{equation}
\begin{split}
    \Delta_1 &= \left[P_{\rm BS}\left(\gamma_{m}\gamma^{\rm d}_{m,n}-\gamma_{n}\gamma^{\rm SI}_m(\delta_{\rm th}^{\rm F})^2-\gamma_{n}\gamma^{\rm SI}_m\delta_{\rm th}^{\rm F}+\gamma_{n}\gamma^{\rm d}_{m,n}\delta_{\rm th}^{\rm F}\right)\right]^2\\ 
    &-4P_{\rm BS}^3\gamma_{m}\gamma^{\rm SI}_m\gamma^{\rm d}_{m,n}\delta_{\rm th}^{\rm F}\left[P_{\rm BS}\gamma_{m}\gamma_{n}-\gamma_{m}\delta_{\rm th}^{\rm F}-\gamma_{n}(\delta_{\rm th}^{\rm F})^2-\gamma_{n}\delta_{\rm th}^{\rm F}\right],
    \label{Delta_1}
\end{split}
\end{equation}
\noindent\makebox[\linewidth]{\rule{\textwidth}{0.4pt}}
\begin{equation}
         b_1= \frac{\biggl[-\sqrt{\Delta_1}+P_{\rm BS}^2\gamma_{m}\gamma_{n}\gamma^{\rm d}_{m,n}+P_{\rm BS}(\gamma_{m}\gamma^{\rm d}_{m,n}\delta_{\rm th}^{\rm F}+\gamma_{m}\delta_{\rm th}^{\rm F}+\gamma_{m}\gamma^{\rm d}_{m,n}+\gamma_{n}\gamma^{\rm SI}_m(\delta_{\rm th}^{\rm F})^2+\gamma_{n}\gamma^{\rm SI}_m\delta_{\rm th}^{\rm F}-\gamma_{n}\gamma^{\rm d}_{m,n}\delta_{\rm th}^{\rm F})\biggl]}{(2P_{\rm BS}^2\gamma_{m}\gamma^{\rm SI}_m\gamma^{\rm d}_{m,n}\delta_{\rm th}^{\rm F})} \\ 
    \label{b1}
\end{equation}
\noindent\makebox[\linewidth]{\rule{\textwidth}{0.4pt}}
\begin{equation}
    b_2=\frac{\biggl[\Delta_2+P_{\rm BS}^2\gamma_{m}\gamma_{n}\gamma^{\rm d}_{m,n}+P_{\rm BS}(\gamma_{m}\gamma^{\rm d}_{m,n}\delta_{\rm th}^{\rm F}+\gamma_{m}\delta_{\rm th}^{\rm F}+\gamma_{m}\gamma^{\rm d}_{m,n}+\gamma_{n}\gamma^{\rm SI}_m(\delta_{\rm th}^{\rm F})^2+\gamma_{n}\gamma^{\rm SI}_m\delta_{\rm th}^{\rm F}-\gamma_{n}\gamma^{\rm d}_{m,n}\delta_{\rm th}^{\rm F})\biggl]}{(2P_{\rm BS}^2\gamma_{m}\gamma^{\rm SI}_m\gamma^{\rm d}_{m,n}\delta_{\rm th}^{\rm F})},
    \label{b2}
\end{equation}
\noindent\makebox[\linewidth]{\rule{\textwidth}{0.4pt}}
\begin{equation}
    b_3=\frac{\biggl[\sqrt{\Delta_1}+P_{\rm BS}^2\gamma_{m}\gamma_{n}\gamma^{\rm d}_{m,n}+P_{\rm BS}(\gamma_{m}\gamma^{\rm d}_{m,n}\delta_{\rm th}^{\rm F}+\gamma_{m}\delta_{\rm th}^{\rm F}+\gamma_{m}\gamma^{\rm d}_{m,n}+\gamma_{n}\gamma^{\rm SI}_m(\delta_{\rm th}^{\rm F})^2+\gamma_{n}\gamma^{\rm SI}_m\delta_{\rm th}^{\rm F}-\gamma_{n}\gamma^{\rm d}_{m,n}\delta_{\rm th}^{\rm F})\biggl]}{(2P_{\rm BS}^2\gamma_{m}\gamma^{\rm SI}_m\gamma^{\rm d}_{m,n}\delta_{\rm th}^{\rm F})},
    \label{b3}
\end{equation}
\noindent\makebox[\linewidth]{\rule{\textwidth}{0.4pt}}
\begin{equation}
\begin{split}
    \Delta_2 &= \left[P_{\rm BS}^2\gamma_{m}\gamma_{n}\gamma^{\rm d}_{m,n}+P_{\rm BS}(\gamma_{m}\gamma^{\rm d}_{m,n}\delta_{\rm th}^{\rm F}+\gamma_{m}\delta_{\rm th}^{\rm F}+\gamma_{m}\gamma^{\rm d}_{m,n}+\gamma_{n}\gamma^{\rm SI}_m(\delta_{\rm th}^{\rm F})^2+\gamma_{n}\gamma^{\rm SI}_m\delta_{\rm th}^{\rm F}-\gamma_{n}\gamma^{\rm d}_{m,n}\delta_{\rm th}^{\rm F})\right]^2 \\ 
    &-4P_{\rm BS}^3\gamma_{m}\gamma^{\rm SI}_m\gamma^{\rm d}_{m,n}\delta_{\rm th}^{\rm F}\left[\gamma{\rm s}(\delta_{\rm th}^{\rm F})^2+\gamma{\rm s}\delta_{\rm th}^{\rm F}-\gamma{\rm w}(\delta_{\rm th}^{\rm F})^2-\gamma{\rm w}\delta_{\rm th}^{\rm F}\right]
    \label{Delta_2}
\end{split}
\end{equation}
\noindent\makebox[\linewidth]{\rule{\textwidth}{0.4pt}}
\end{figure*}
Theorem \ref{theorem 3} provides the feasibility conditions of the inner problem $\mathcal{P}_{\rm inner}$, which ensure that there exists at least a value of $\alpha_{m,n}$ and $P^{\rm d}_{m,n}$ that satisfy the constraints \eqref{constraint:P2.1:minimumRateFull1}-\eqref{constraint:P2.1:minimumRateFull2}. Now, based on the results of theorem 3, the closed-form solution of the optimal power control scheme of problem for the case of FD C-NOMA is given in the following theorem.
\begin{theorem}
\label{theorem 4}
For FD C-NOMA, the optimal power control scheme of the inner problem $\mathcal{P}_{\rm inner}$ is given by the couple $\left(\alpha_{m,n}^*,P^{{\rm d}^*}_{m,n}\right)$, where $P^{{\rm d}^*}_{m,n}$ is expressed as 
\begin{equation}
    \begin{aligned}
    &P^{{\rm d}^*}_{m,n} = \\ 
    &\left\{ 
    \begin{aligned}
    &\bar{P}^{\rm d} \times \mathbb{1} \left(0 \leq \bar{P}^{\rm d} \leq b_2 \right) + b_2 \times \mathbb{1} \left(b_2 \leq \bar{P}^{\rm d} \right), \quad &&\text{if} \,\,\Delta_1 < 0, \\
    &\bar{P}^{\rm d} \times \mathbb{1} \left(0 \leq \bar{P}^{\rm d} \leq \max(b_1,b_2) \right) + \max(b_1,b_2) \times \mathbb{1} \left(\max(b_1,b_2) \leq \bar{P}^{\rm d} \right), \quad &&\text{if} \, \,\Delta_1 \geq 0 \land b_1 \geq 0,\\
    &\bar{P}^{\rm d} \times \mathbb{1} \left(\max(0,b_3) \leq \bar{P}^{\rm d} \leq b_2 \right) + b_2 \times \mathbb{1} \left(b_2 \leq \bar{P}^{\rm d} \right), \quad &&\text{if} \, \,\Delta_1 \geq 0 \land b_1 \leq 0 \land b_3\leq b_2,
    \end{aligned}
    \right.
    \end{aligned}
    \label{eqn:power}
\end{equation}
in which $\mathbb{1} \left( \cdot \right)$ denotes the indicator function, and $\alpha_{m,n}^* = f \left(P^{{\rm d}^*}_{m,n} \right)$, such that the function $f$ is given, for all $x \in \mathbb{R}$, by
\begin{equation}
    f(x) = \frac{\delta^{\rm F}_{\rm th}(\gamma^{\rm SI}_mx+\gamma_{m}P_{\rm BS}+1)}{\gamma_{m}P_{\rm BS}(\gamma_{\rm th}+1)}.
\end{equation}
\end{theorem}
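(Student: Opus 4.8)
The plan is to exploit a telescoping identity that collapses the FD sum‑rate and reduces $\mathcal{P}_{\rm inner}$ to a one‑dimensional optimization; throughout I assume $\mathcal{P}_{\rm inner}$ is feasible, i.e.\ that the conditions of Theorem \ref{theorem 3} hold. First I would add \eqref{eq:Ratem} and \eqref{eq:Ratem-1}: the strong AU's own‑message rate and its rate for decoding $s_{n}$ telescope, since
\[
R^{\rm F}_{m}\!\left(\alpha_{m,n},P^{\rm d}_{m,n}\right)+R^{\rm F}_{m,n}\!\left(\alpha_{m,n},P^{\rm d}_{m,n}\right)=\log_{2}\!\left(1+\frac{P_{\rm BS}\gamma_{m}}{P^{\rm d}_{m,n}\gamma^{\rm SI}_{m}+1}\right)=:G\!\left(P^{\rm d}_{m,n}\right),
\]
which is independent of $\alpha_{m,n}$ and strictly decreasing in $P^{\rm d}_{m,n}$. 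Plugging this into \eqref{eq:rate-weak} gives $R_{m,n}=\min\!\big(G(P^{\rm d}_{m,n}),\,R^{\rm F}_{m}+R^{\rm MRC}_{n}\big)$, so $R_{m,n}\le G(P^{\rm d}_{m,n})$ with equality exactly when $R^{\rm F}_{m,n}\le R^{\rm MRC}_{n}$.

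Next I would describe the feasible set. For a fixed $P^{\rm d}_{m,n}$, the monotonicity of the three rate terms in $\alpha_{m,n}$ ($R^{\rm F}_{m}$ decreasing; $R^{\rm F}_{m,n}$ and $R^{\rm MRC}_{n}$ increasing), together with the feasibility analysis behind Theorem \ref{theorem 3}, turns \eqref{constraint:P2.1:budget1}, \eqref{constraint:P2.1:minimumRateFull1}, \eqref{constraint:P2.1:minimumRateFull2} into a single interval $\alpha_{m,n}\in[\underline{\alpha}(P^{\rm d}_{m,n}),\overline{\alpha}(P^{\rm d}_{m,n})]$: the weak‑user QoS $R_{n}=\min(R^{\rm F}_{m,n},R^{\rm MRC}_{n})\ge R_{\rm th}$ forces $R^{\rm F}_{m,n}\ge R_{\rm th}$, which solved for $\alpha_{m,n}$ is precisely $\alpha_{m,n}\ge f(P^{\rm d}_{m,n})$ with $f$ as in the statement, while $R^{\rm MRC}_{n}\ge R_{\rm th}$ and $R^{\rm F}_{m}\ge R_{\rm th}$ contribute a second lower bound and an upper bound. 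Since $G(P^{\rm d}_{m,n})$ does not depend on $\alpha_{m,n}$, we may swap the inner max and the min, $\max_{\alpha_{m,n}}R_{m,n}=\min\!\big(G(P^{\rm d}_{m,n}),\,\max_{\alpha_{m,n}}[R^{\rm F}_{m}+R^{\rm MRC}_{n}]\big)$, and writing $R^{\rm F}_{m}+R^{\rm MRC}_{n}=G(P^{\rm d}_{m,n})-D(\alpha_{m,n},P^{\rm d}_{m,n})$ with $D:=R^{\rm F}_{m,n}-R^{\rm MRC}_{n}$ reduces the inner step to minimizing $D$ over $[\underline{\alpha},\overline{\alpha}]$.

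To minimize $D$ I would differentiate it in $\alpha_{m,n}$ and clear denominators; the resulting sign condition is a quadratic in $P^{\rm d}_{m,n}$ whose discriminant is $\Delta_{1}$ and whose roots are $b_{1}$ and $b_{3}$ — this is where the trichotomy in \eqref{Cond-FD} and \eqref{eqn:power} originates. In each branch one checks that the minimizer of $D$ on $[\underline{\alpha},\overline{\alpha}]$ is the lower endpoint and that, at the $P^{\rm d}_{m,n}$ values that can be optimal, this endpoint equals $f(P^{\rm d}_{m,n})$ (equivalently $R^{\rm F}_{m,n}=R_{\rm th}$ at the optimum), giving $\alpha^{*}_{m,n}=f(P^{{\rm d}^{*}}_{m,n})$. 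Substituting $\alpha_{m,n}=f(P^{\rm d}_{m,n})$ leaves a one‑variable maximization over $P^{\rm d}_{m,n}\in[0,\bar{P}^{\rm d}]$ of $\min\!\big(G(P^{\rm d}_{m,n}),\,R^{\rm F}_{m}(f(P^{\rm d}_{m,n}),P^{\rm d}_{m,n})+R^{\rm MRC}_{n}(f(P^{\rm d}_{m,n}),P^{\rm d}_{m,n})\big)$, whose two arguments are monotone in $P^{\rm d}_{m,n}$ and meet at $P^{\rm d}_{m,n}=b_{2}$ (the crossover quadratic carrying the discriminant $\Delta_{2}$), with the admissible window bounded below by $b_{3}$ and, when $b_{1}\ge 0$, by $b_{1}$; comparing these monotone pieces and intersecting the unconstrained optimizer with $[0,\bar{P}^{\rm d}]$ reproduces exactly the three branches of \eqref{eqn:power}.

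The telescoping identity, the interval description, and the final substitution are routine; the delicate part is the minimization of $D$, since $D$ is in general neither convex nor monotone in $\alpha_{m,n}$. The proof must therefore split on the sign of $\Delta_{1}$, then on the sign of $b_{1}$, and compare $b_{3}$ with $b_{2}$, and in each case prove that the winning endpoint is $f(P^{\rm d}_{m,n})$; getting these case distinctions to coincide with \eqref{Cond-FD} and the branches of \eqref{eqn:power}, together with the attendant boundary arithmetic, is the main obstacle.
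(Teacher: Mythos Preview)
Your telescoping identity $R^{\rm F}_{m}+R^{\rm F}_{m,n}=G(P^{\rm d}_{m,n})$ is correct and is a cleaner entry point than the paper's, but the argument breaks at the step where you locate $\Delta_{1}$, $b_{1}$, $b_{3}$. You claim that these arise from differentiating $D=R^{\rm F}_{m,n}-R^{\rm MRC}_{n}$ in $\alpha_{m,n}$ and clearing denominators. That cannot be right: $D$ does not involve the QoS threshold $R_{\rm th}$ at all, whereas $\Delta_{1}$, $b_{1}$, $b_{3}$ in \eqref{Delta_1}, \eqref{b1}, \eqref{b3} all depend explicitly on $\delta^{\rm F}_{\rm th}=2^{R_{\rm th}}-1$. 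In the paper these quantities are purely constraint geometry: $b_{1}$ and $b_{3}$ (with discriminant $\Delta_{1}$) are the two roots of $B^{\rm F}_{m,n}(P^{\rm d}_{m,n})=C^{\rm F}_{m,n}(P^{\rm d}_{m,n})$, i.e.\ the $P^{\rm d}$–values at which the lower boundary from $R^{\rm MRC}_{n}\ge R_{\rm th}$ meets the upper boundary from $R^{\rm F}_{m}\ge R_{\rm th}$, and $b_{2}$ is the root of $A^{\rm F}_{m,n}=B^{\rm F}_{m,n}$. They have nothing to do with the stationarity of $D$, so the mechanism you propose for producing the trichotomy does not exist.

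The paper's route is much shorter and never looks at $D$. It rewrites the QoS constraints as $\max\!\left(A^{\rm F}_{m,n},B^{\rm F}_{m,n}\right)\le\alpha_{m,n}\le C^{\rm F}_{m,n}$, asserts (via a water‑filling heuristic) that the sum rate is maximized at the smallest feasible $\alpha_{m,n}$, and then observes that $A^{\rm F}_{m,n}$ is increasing while $B^{\rm F}_{m,n}$ is decreasing in $P^{\rm d}_{m,n}$, so the lower envelope $\max(A^{\rm F}_{m,n},B^{\rm F}_{m,n})$ is minimized at their crossing $b_{2}$; the three branches in \eqref{eqn:power} are then read off directly from the three feasibility pictures of Theorem~\ref{theorem 3} together with the budget $\bar{P}^{\rm d}$. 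If you want to salvage your telescoping route, the correct bookkeeping is: for $P^{\rm d}_{m,n}\ge b_{2}$ the lower endpoint of the feasible $\alpha$–interval is $A^{\rm F}_{m,n}=f(P^{\rm d}_{m,n})$ and there $D\le 0$, so the inner maximum equals $G(P^{\rm d}_{m,n})$, which is decreasing and hence optimal at $P^{\rm d}_{m,n}=b_{2}$; for $P^{\rm d}_{m,n}<b_{2}$ the lower endpoint is $B^{\rm F}_{m,n}$, not $f(P^{\rm d}_{m,n})$, and $\alpha_{m,n}=f(P^{\rm d}_{m,n})$ is \emph{infeasible} there, so your ``substitute $\alpha_{m,n}=f(P^{\rm d}_{m,n})$'' step is a second gap. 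The role of $b_{1},b_{3}$ is only to delimit the $P^{\rm d}_{m,n}$–values for which the interval $[\max(A^{\rm F}_{m,n},B^{\rm F}_{m,n}),\,C^{\rm F}_{m,n}]$ is non‑empty, not to control the sign of $\partial D/\partial\alpha_{m,n}$.
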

\begin{proof}
Please see Appendix \ref{proof:FD}.
\end{proof}
\subsection{Pairing Policy and Proposed Algorithm}
After deriving the optimal power control schemes for each pair and the corresponding achievable rate, we can proceed to apply the Hungarian method to determine the optimal pairing configurations. Let $\textbf{g}$ be the $2K\times 1$ vector that contains the channel gains from the BS to the AUs in a way that are sorted in an ascending order, where entries $g_1,g_2,...,g_K$ represent the channel gains of the weak AUs and $g_{K+1},g_{K+2},...,g_{2K}$ represent the channel gains of the strong AUs. In addition, let $\mathbf{D}$ be the $K \times K$ matrix that contains the D2D channel gains and let $\textbf{s}$ be the $K \times 1$ vector that contains the SI channel gains. The input of Hungarian algorithm is a $K\times K$ cost matrix $\textbf{C}$ and its output is the pairing matrix $\textbf{B}^*$ where $\textbf{B}^{*}(m,n)= 1$ indicates that the AU $m$ in the strong AUs set $\mathcal{S}$ is paired with the AU $n$ in the weak AUs set $\mathcal{W}$, and $\textbf{B}^{*}(m,n)=0$ otherwise. In our algorithm, we define the cost of pairing two users as the opposite value of the sum rate obtained by solving the inner problem $\mathcal{P}_{\rm inner}$. After the computation of $\textbf{C}$, the Hungarian algorithm is applied to solve the outer problem $\mathcal{P}_{\rm outer}$. The proposed algorithm is highlighted in Algorithm~\ref{alg:algorithm1}. After computing the matrix $\textbf{B}^*$, it is straightforward to solve problem $\mathcal{P}$ since the optimal value $\alpha_{m,n}^*$ and $P_{m,n}^{{\rm d}^*}$ within each pairs have been found in step 2 of Algorithm~\ref{alg:algorithm1}. After applying algorithm \ref{alg:algorithm1}, we obtain the optimal pairing policy and the optimal power control scheme within each pair, which is the final solution for problem $\mathcal{P}$. 
\begin{algorithm}[t]
\caption{Low-complexity algorithm.}
\begin{algorithmic} 
\STATE 1. \textbf{Estimate} channel gain vector $\textbf{g}$, D2D channel gain matrix $\textbf{D}$, and the SI channel gain vector $\textbf{s}$.
\STATE 2. \textbf{Compute} the optimal rate for each pairing configuration $R_{m,n}^*$ following theorem \ref{theorem 2} for HD case and theorem \ref{theorem 4} for FD case.
\STATE 3. \textbf{Compute} the cost matrix $\textbf{C}$ with $\mathbf{C}(m,n) = -R_{m,n}^*$.
\STATE 4. \textbf{Solve} the optimal pairing matrix $\textbf{B}^*$ using the Hungarian algorithm  with cost matrix $\textbf{C}$.
\end{algorithmic}
\label{alg:algorithm1}
\end{algorithm} 
\subsection{Mode Selection}
\label{subsec:mode}
Mode selection has been proposed in previous works on relaying systems in general and C-NOMA based system in particular \cite{Liu:C-NOMA:2users,Li:ModeSelection:Letter}. Mode selection assumes that the user device have the capability of both FD and HD relaying and can switch from one mode to the other (hybrid devices). The idea of mode selection stems from the fact that FD relaying does not necessarily perform better than HD relaying especially with the presence of high SI.
Since we can compute the resulting optimal sum-rate value for both FD C-NOMA and HD C-NOMA based on theorem \ref{theorem 2} and theorem \ref{theorem 4}, determining the optimal mode is straightforward. In fact, we just need to select the mode that gives the highest resulting sum-rate from the power control schemes derived in these theorems.
\subsection{Complexity Analysis}
Complexity is worth bringing into discussion, since the complexity of the proposed method seems extreme and it requires the computation of the rate values for all possible pairing configurations. However, it is worth mentioning that, due to the closed-form solution obtained in Theorems \ref{theorem 2} and \ref{theorem 4}, the computational complexity of obtaining the optimal sum rate for a given pair is approximately $\mathcal{O}(1)$. Thus, even for all the possible configurations, the computational complexity of obtaining all the sum rates is $\mathcal{O}(K^2)$. With the addition of the Hungarian method, the total computational complexity of our proposed algorithm is approximately $\mathcal{O}(K^2+K^3)$ or $\mathcal{O}(K^3)$ for large values of $K$. Clearly, the overall computational complexity depends more on the Hungarian method than the computation of the cost matrix $\textbf{C}$.
\section{Simulation Results}
\label{sec:Simulation Results}
In this section, we will validate the proposed scheme. We assume that the channel gains of the strong AU $h_{m}$, the weak AU $h_{n}$, the D2D link $h^{\rm d}_{m,n}$, and the SI $h^{\rm SI}_m$ follows independent Rayleigh distributions with scale parameters, $\sigma_{\rm s}$, $\sigma_{\rm w}$, $\sigma_{\rm d}$ and $\sigma_{\rm SI}$, respectively. Therefore, their associated squares $\gamma_{m}$, $\gamma_{n}$, $\gamma^{\rm d}_{m,n}$ and $\gamma^{\rm SI}_m$ follow independent exponential distributions with means $\lambda_{\rm s} = \left(2 \sigma_{\rm s} \right)^{-1}$, $\lambda_{\rm w} = \left(2 \sigma_{\rm w} \right)^{-1}$, $\lambda_{\rm d} = \left(2 \sigma_{\rm d} \right)^{-1}$, and $\lambda_{\rm SI} = \left(2 \sigma_{\rm SI} \right)^{-1}$, respectively. The required QoS is defined by the minimum achievable rate threshold that is given by $R_{\rm th} = 1$ [bps/Hz]. Simulation results are performed over $10^5$ independent Monte-Carlo trials on the channel gain realizations. \\
\begin{figure}[t]
    \centering
    \includegraphics[width= 0.55\textwidth]{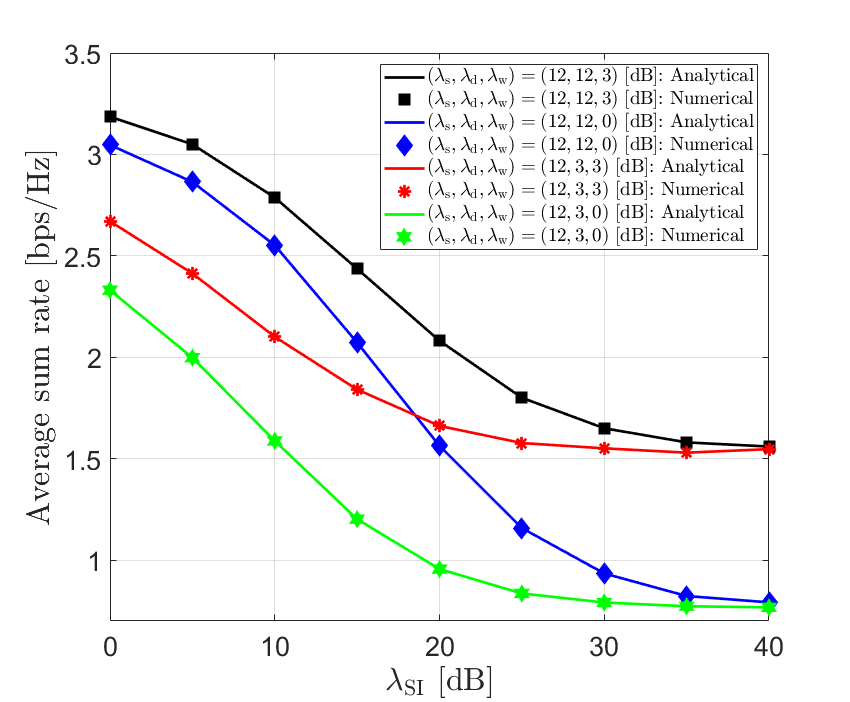}
    \caption{Analytical and numerical average achievable sum-rate versus the average self interference channel $\lambda_{\rm SI}$.}     
    \label{fig:aVSn}
\end{figure}
\indent Fig.~\ref{fig:aVSn} presents the analytical and numerical average sum rate for one pair of users with a FD C-NOMA transmission scheme versus the average self interference channel $\lambda_{\rm SI}$ for different means of the channel gain of the strongest AU $\lambda_{\rm s}$, the weakest AU $\lambda_{\rm w}$ and the D2D link $\lambda_{\rm d}$. The analytical results are obtained through the closed-form power control scheme derived in theorem 4 whereas the numerical results are obtained by solving problem $\mathcal{P}_{\rm inner}$ using an off-the-shelf optimization solver.\footnote{The adopted solver is fmincon, which is a predefined matlab solver \cite{ebbesen2012generic}. In addition, 100 distinct initial points were generated in order to converge to the optimal solution.} This figure shows that the analytical results match perfectly the numerical results, which validate the optimality of the power control scheme derived in theorem 4. In addition, Fig.~\ref{fig:aVSn} shows that when the mean of the SI channel increases, the average sum rate decreases. This observation is expected since as shown in equations (10) and (11), the achievable rates of the strongest AU and the weakest AU are decreasing with respect to $\lambda_{\rm SI}$. On the other hand, by comparing the cases where $\left(\lambda_{\rm s}, \lambda_{\rm d}, \lambda_{\rm w} \right) = \left(12,12,3 \right)$ [dB] and $\left(12,3,3 \right)$ [dB], this figure shows that, when the mean of the SI channel increases, the potential of the D2D relaying decreases. The same observation holds when comparing the cases $\left(\lambda_{\rm s}, \lambda_{\rm d}, \lambda_{\rm w} \right) = \left(12,12,0 \right)$ [dB] and $\left(12,3,0 \right)$ [dB]. This observation makes sense as well, since 
when $\lambda_{\rm SI}$ increases, the SI at the strongest AU increases, and thus, the only way to alleviate its effect is decreasing the transmit D2D relaying power $P^{\rm d}_{m,n}$. This makes C-NOMA converges to the conventional NOMA. \\
\begin{figure}[t]
    \centering
 \includegraphics[width= 0.55\textwidth]{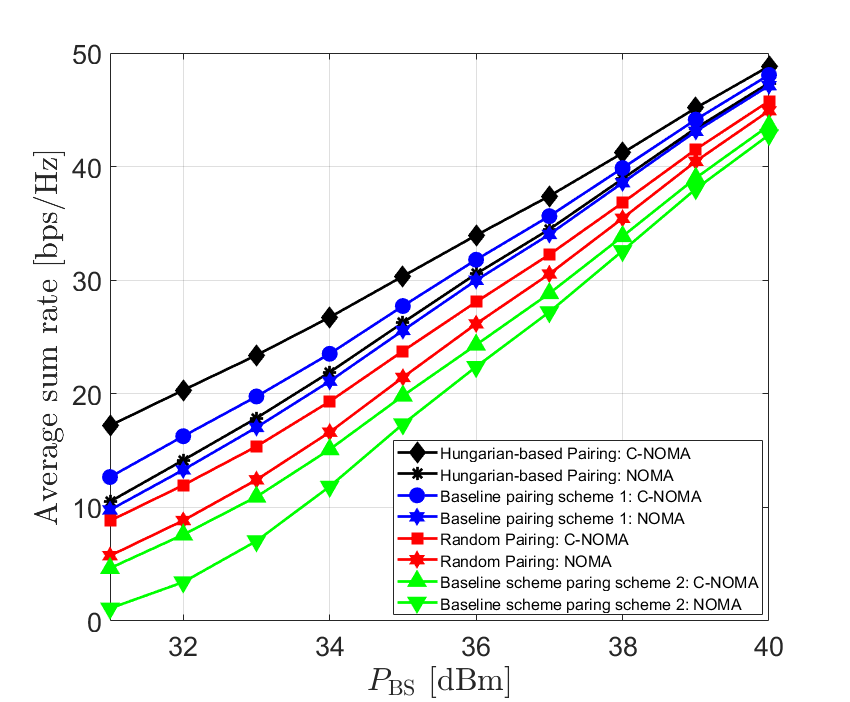}
  \caption{Average achievable sum-rate versus the power budget at the base station $P_{\rm BS}$ for C-NOMA and NOMA with different pairing schemes ($\lambda_{\rm s} = 10$ dB, $\lambda_{\rm w} =  0$ dB, $\lambda_{\rm d} =  6$ dB, $\lambda_{\rm SI} =  0$ dB, $\bar{P}^{\rm d}$ = 30 dBm)}  
  \label{fig:compare1}
\end{figure}
\indent Fig.~\ref{fig:compare1} presents the average sum rate achieved by the FD C-NOMA and conventional NOMA with different pairing schemes versus the power budget at the base station $P_{\rm BS}$. For conventional NOMA, different pairs of AUs are served via OMA, where the AUs within each pair are served via NOMA with no cooperation between users. Besides, we compare the proposed user pairing policy with three different pairing schemes. For baseline pairing scheme 1, we pair the $k$th AU $(1\leq k \leq K)$ with the $(2K-k+1)$th AU, e.g, the weakest AU is paired with the strongest AU, the second weakest is paired with the second strongest, and so on \cite{Letter:NOMA:Pairing}. For baseline pairing scheme 2, we pair the $k$th AU $(1\leq k \leq K)$ with the $(K+k)$th AU, e.g, the $k$th weakest AU in the set of weak AUs $\mathcal{W}$ is paired with the $k$th weakest AU in the set of strong AUs $\mathcal{S}$. For random pairing scheme, two randomly selected AU in each set are paired with each other. Fig.~\ref{fig:compare1} shows that the proposed user pairing policy outperforms the baseline schemes. In addition, this figure shows that enabling cooperation between users within each pair improves the performance of the system. \\
\begin{figure}[t]
\centering
  \includegraphics[width=0.55\textwidth, height = 0.5\textwidth]{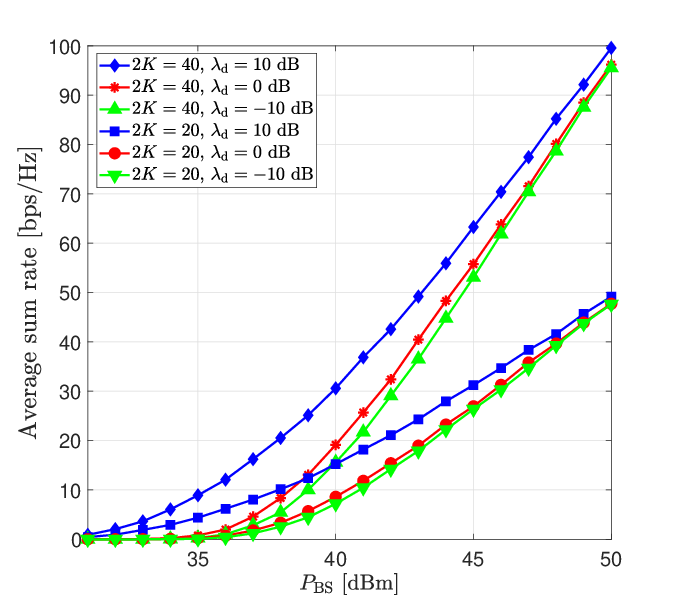}
  \caption{Average sum rate versus the power budget at the base station $P_{\rm BS}$ for C-NOMA and different means of the D2D channel gain $\lambda_{\rm d}$ ($\lambda_{\rm s} = 10$ dB, $\lambda_{\rm w} =  6$ dB $\lambda_{\rm SI} =  6$ dB, $\bar{P}^{\rm d}$ = 30 dBm)}
  \label{fig:number_users}
\end{figure}
\indent Fig.~\ref{fig:number_users} presents the average sum rate achieved by FD C-NOMA and the Hungarian method versus the power budget at the base station $P_{\rm BS}$ when the total number of users is $2K = 20$ and $2K = 40$ and for different values of the mean D2D channel $\lambda_D$. This figure shows that the average sum rate increases when the total number of users increases and when the mean D2D channel $\lambda_D$ increases. This observation is expected since, as shown in equation (13) the achievable rate of the weak AU increases when $\lambda_D$ increases. Note that, in Fig.~\ref{fig:aVSn}, Fig.~\ref{fig:compare1} and Fig.~\ref{fig:number_users}, the performance of HD C-NOMA follows similar pattern to the one of FD C-NOMA and this is why it is omitted here. \\
\begin{figure}
    \centering
    \includegraphics[width=0.55\textwidth, height = 0.5\textwidth]{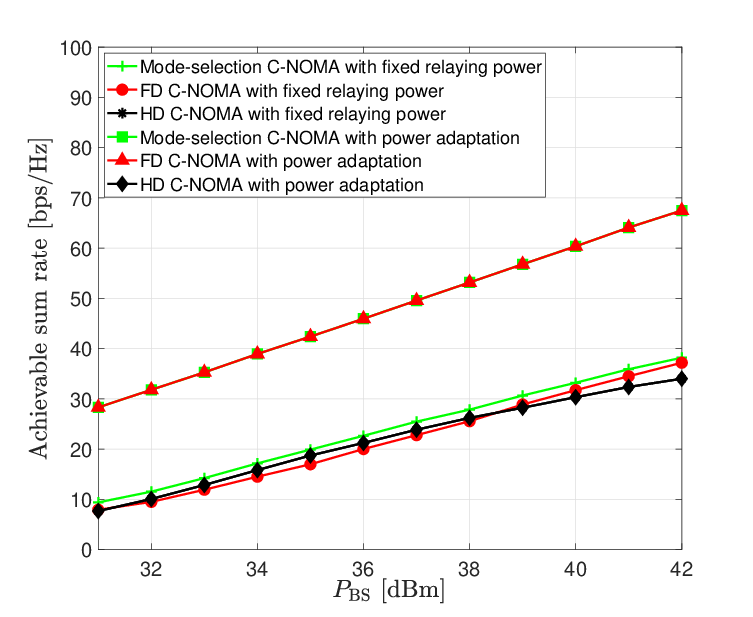}
    \caption{Comparisons between different relaying modes of cooperation in C-NOMA system for fixed and adaptive relaying power control with different power budget at the BS $P_{\rm BS}$ ($\lambda_{\rm s} = 10$ dB, $\lambda_{\rm w} =  6$ dB, $\lambda_{\rm d} =  \lambda_{\rm SI} =  6$ dB).}
    \label{fig:compare3modes}
\end{figure}
\indent In Fig. \ref{fig:compare3modes}, we evaluate the performance of the system when the AUs can operate under different relaying modes. In total, we compare between three relaying modes, which are HD relaying, FD relaying and hybrid relaying with mode selection as presented in subsection \ref{subsec:mode}. In addition, we consider the cases where, within each pair, either the strong AU forwards the message of the weak AU with all its available D2D power budget $\bar{P}^{\rm d}$, i.e., without power adaptation, or it employs the closed-form solution of the optimal transmit D2D power $P^{{\rm d}^*}_{m,n}$ obtained in theorem 2 for the HD case and in theorem 4 for the FD case. Three observations can be remarked from Fig. \ref{fig:compare3modes}. First, with power adaptation and assuming the given parameters, the best relaying mode is FD mode, which explains why C-NOMA with mode selection and FD C-NOMA have the same performance. Second, the HD C-NOMA is not affected by the power adaptation at each strong AU, since even with power adaptation, HD C-NOMA will always choose to transmit with maximum power. This observation is expected because HD C-NOMA operates without the induction of SI at the strong users. Thus, the higher power the strong AUs can transmit with, the higher the achievable sum rate is. The third observation is that, for the case when there is no power adaptation at the strong AUs, HD C-NOMA can outperform FD C-NOMA although FD C-NOMA can operate over the whole subchannel assigned. Obviously, without a proper power adaption that is based on the channel conditions, the transmit power at the strong AUs will adversely affect its achievable rate due to SI. This trend is highly observable when the BS power budget $P_{\rm BS}$ is lower than D2D power budget $\bar{P}^{\rm d}$, which means that the impact of the D2D relaying transmission becomes comparable to that of the downlink access transmission. In this situation, the relaying mode becomes an important feature since the SI channel gain $h_{\rm SI}$ and the adopted D2D transmit power $P^{\rm d}_{m,n}$ will significantly impact the overall achievable sum rate. When the power $P_{\rm BS}$ becomes higher, the downlink access transmission can be enough to fulfil the minimum QoS constraints and the D2D cooperative transmission becomes less influential on the sum rate of the overall system; therefore FD C-NOMA will prevail HD C-NOMA since the downlink transmission can occupy double the bandwidth resource. \\
\begin{figure}[t]
    \centering
    \includegraphics[width=0.55\textwidth, height = 0.5\textwidth]{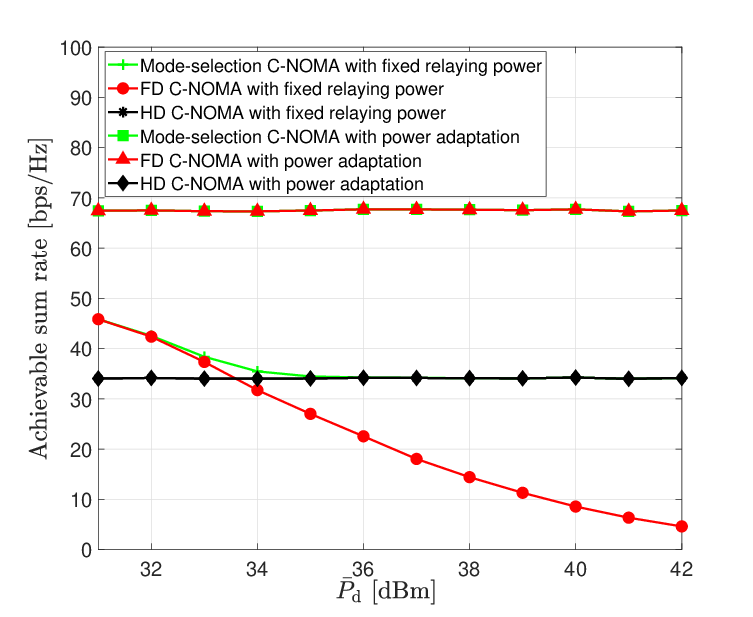}
    \caption{Comparisons between different relaying modes of cooperation in C-NOMA system for fixed and adaptive relaying power control with different power budget at the user device $\bar{P}^{\rm d}$ ($\lambda_{\rm s} = 10$ dB, $\lambda_{\rm w} =  6$ dB, $\lambda_{\rm d} =  6$ dB, $\lambda_{\rm SI} =  0$ dB, $P_{\rm BS}= 42$ dBm).}
    \label{fig:compare3modes2}
\end{figure}
\indent Fig. \ref{fig:compare3modes2} presents the average achievable sum rate of the overall C-NOMA system versus the D2D power budget $\bar{P}^{\rm d}$ for the three aforementioned relaying modes, with and without power adaptation at the strong AUs. Similar to Fig. \ref{fig:compare3modes}, for the case with power adaptation, FD C-NOMA and mode-selection C-NOMA have the same performance and stay constant even when the D2D power budget increases. The reason is that the relaying power in these schemes will adapt to the various channel conditions and operate at the optimal power. In other words, it is not necessary to exhaust the D2D transmit power as in the fixed relaying scheme for two reasons. First, it is energy-inefficient for the users' devices. Second, the overall sum rate decreases when there is no power adaptation. Finally, it is also worth noting that for the case of fixed relaying power scheme and low D2D power budget, mode-selection C-NOMA converges to FD C-NOMA, whereas it converges to HD C-NOMA at high D2D power budget. \\
\begin{figure}[t]
    \centering
    \includegraphics[width=0.55\textwidth, height = 0.5\textwidth]{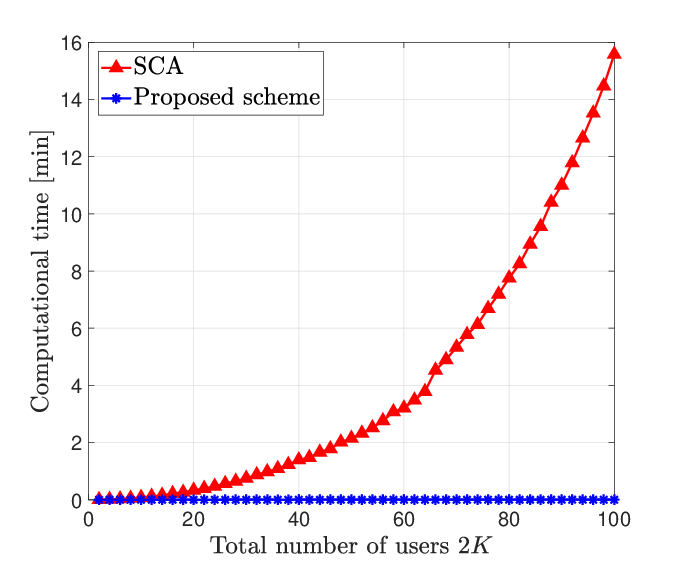}
    \caption{Computational time of the proposed scheme versus that of SCA-based approach .}
    \label{fig:computational time}
\end{figure}
\indent In Fig. \ref{fig:computational time}, we compare the computational time of the proposed algorithm in \ref{alg:algorithm1} with one of the SCA-based scheme. In the SCA-based scheme, the inner problem $\mathcal{P}_{\rm inner}$ is solved using successive linear relaxations of all the non-convex functions until convergence, where each relaxation is the input to an off-the-shelf optimization solver. This method has been repeatedly used for many non-convex optimization problems. Despite giving decent performance, it can be seen that the computational time for SCA-based method scale exponentially when the number of AUs increases while the computational time of our proposed scheme is negligible. For instance, when the number of users is 100, it takes about 16 minutes to compute the optimal result using SCA-based scheme, whereas it takes about 10 ms to compute the optimal solution using the proposed algorithm.
\section{Conclusion}\label{Sec:Conclusion}
This paper studied the performance of C-NOMA in a cellular downlink system, in which users that have the capability of both HD and FD communications can assist the transmissions between the BS and users with poor channel quality through D2D communications. A novel optimization problem that jointly determines the optimal D2D user pairing and the optimal power control scheme was formulated and solved, where the objective is maximizing the achievable sum rate of the whole system while guaranteeing a certain quality of service (QoS) for all users. A two-step policy was proposed to solve the problem in a polynomial time. First, the closed-form expression of the optimal power control scheme that maximizes the sum rate of a given pair of users with a required QoS was derived for both cases of FD and HD communications. Then, using the derived expressions, the Hungarian method was adopted as a user pairing policy, which provided the optimal pairing strategy. The simulation results showed that the proposed scheme always outperforms some existing schemes in the literature. \\ 
\indent This paper presents the first work that investigates the joint user pairing and power control optimization for multi-user C-NOMA. However, it assumes full CSI at the BS. Therefore, the impact of imperfect users CSI at the BS will be studied in future works. 
\section*{Acknowledgment}
Phuc Dinh, Mohamed Amine Arfaoui and Chadi Assi acknowledge the financial supports from Concordia University, Montreal, Quebec, Canada, and from the Fonds Québécois de la Recherche sur la Nature et les Technologies (FQRNT), Quebec, Canada.
\appendices
\section{Power control for HD C-NOMA}
\label{proof:HD}
In this section, we present the proof of theorems \ref{theorem 1} and \ref{theorem 2}. At first, note that for any value $\alpha \in [0,1]$, constraints \eqref{constraint:P2.1:minimumRateFull1} and \eqref{constraint:P2.1:minimumRateFull2} can be equivalently transformed into the following inequalities.
\begin{equation}
    \label{feasible_alpha-HD}
        \max \left(A^{\rm H}_{m,n},B^{\rm H}_{m,n}\left( P^{\rm d}_{m,n} \right)\right) \leq \alpha \leq C^{\rm H}_{m,n},
\end{equation}
where 
\begin{equation}
\begin{split}   
    &A^{\rm H}_{m,n} = \frac{\delta_{\rm th}^{\rm H}(P_{\rm BS}\gamma_{m}+1)}{P_{\rm BS}\gamma_{m}(\delta_{\rm th}^{\rm H}+1)},\\ 
    &B^{\rm H}_{m,n}\left( P^{\rm d}_{m,n} \right) = \frac{(P_{\rm BS}\gamma_{n}+1)(\delta_{\rm th}^{\rm H}-P^{\rm d}_{m,n}\gamma^{\rm d}_{m,n})}{P_{\rm BS}\gamma_{n}(\delta_{\rm th}^{\rm H}+1-P^{\rm d}_{m,n}\gamma^{\rm d}_{m,n})}, \\
    &C^{\rm H}_{m,n} = 1-\frac{\delta_{\rm th}^{\rm H}(\gamma^{\rm SI}_m+1)}{P_{\rm BS}\gamma_{m}}.
\end{split}
\end{equation}
Obviously, $A^{\rm H}_{m,n}$, $B^{\rm H}_{m,n}$, and $C^{\rm H}_{m,n}$ define together the boundaries of the region of the feasible solutions of $\alpha$ as presented in Fig. \ref{fig:HDBounds}. In addition, it can also be observed from equation \eqref{eq:P2.1:obj} that the sum rate is an decreasing function with respect to $\alpha_{m,n}$ and an increasing function with respect to $P^{\rm d}_{m,n}$, which can be confirmed through its partial derivatives. Consequently, to maximize this sum-rate function, the smallest and feasible value of $\alpha_{m,n}$ should be selected. By observing Fig. \ref{fig:HDBounds}, it can be realized that the pre-selected values of $\alpha_{m,n}$ are along the red and the blue lines, depending on the value of the relaying power budget $\bar{P}^{\rm d}$. Let $P_{m,n}^{\min}$ denotes the power value of the intersection point between the black and the red lines, which can be easily found by solving the equation $B^{\rm H}_{m,n}=C^{\rm H}_{m,n}$. Now, in order for the inner problem $\mathcal{P_{\rm inner}}$ to be feasible, it is necessary that the feasibility region (green region in Fig. \ref{fig:HDBounds}) is non-empty. In other words, the value $\bar{P^{\rm d}_{m,n}}$ needs to be greater then $P_{m,n}^{\min}$ and the upper bound $C^{\rm H}$ needs to be greater than $0$. Solving these two inequalities implies the feasibility conditions in \ref{Cond-HD1}-\ref{Cond-HD2}, which completes the proof of theorem \ref{theorem 1}. \\
\indent From the above observation, it is obvious that the value $P^{\rm d}_{m,n}$ should be as large as possible so that value $\alpha$ takes the minimum value. However, after the intersection between the red and the blue lines, which holds at the power value $P_{m,n}^{\text{int}}$, i.e., when $B^{\rm H}_{m,n}=A^{\rm H}_{m,n}$, the value of $\alpha$ will be constant even when $P_{m,n}^{\text{int}} \leq P^{\rm d}_{m,n}$. Thus, when $\bar{P}^{\rm d} \leq P_{m,n}^{\text{int}}$, the optimal value of $P^{\rm d}_{m,n} = \bar{P}^{\rm d}$ and the optimal value of $\alpha = B^{\rm H}_{m,n}\left( \bar{P}^{\rm d} \right)$. However, from a power efficiency perspective, when $P_{m,n}^{\text{int}} \leq \bar{P}^{\rm d}$, the optimal value of $P^{\rm d}_{m,n} = P_{m,n}^{\text{int}}$ and the optimal value of $\alpha = A^{\rm H}_{m,n} = B^{\rm H}_{m,n}\left( P_{m,n}^{\text{int}} \right)$. This scheme implies the closed-form solutions in \eqref{eq:optimal_power-HD}, which completes the proof for theorem \ref{theorem 2}.
\begin{figure}
    \centering
    \includegraphics[width=0.45\textwidth, height = 0.35\textwidth]{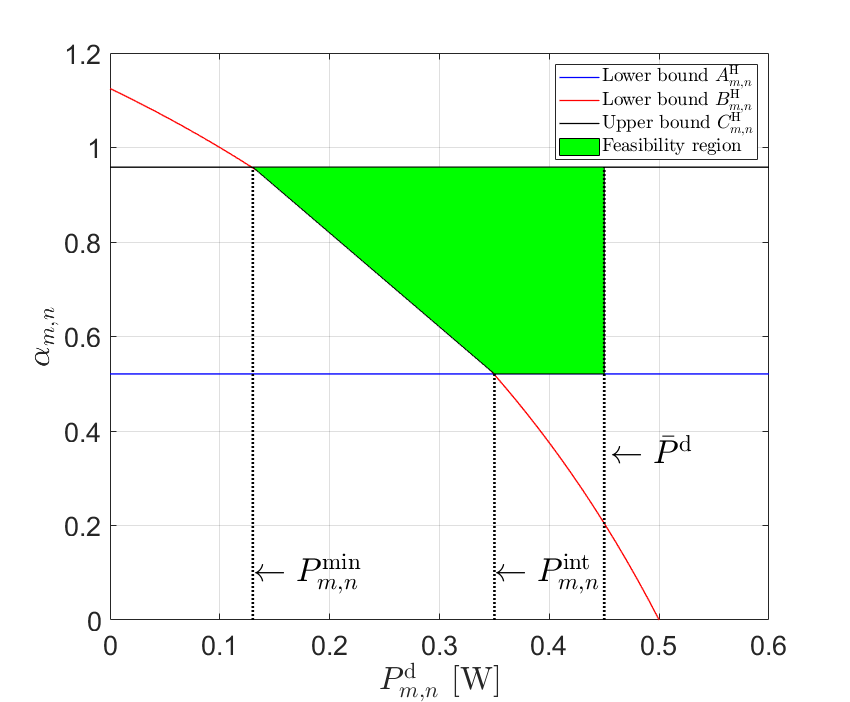}
    \caption{Feasibility region of power control problem for HD C-NOMA.}
    \label{fig:HDBounds}
\end{figure}
\section{Power control for FD C-NOMA}
\label{proof:FD}
\begin{figure*}[t]
\centering     
\subfigure[No intersection.]{\label{fig:app1}\includegraphics[width=0.33\linewidth, height = 0.25\textwidth]{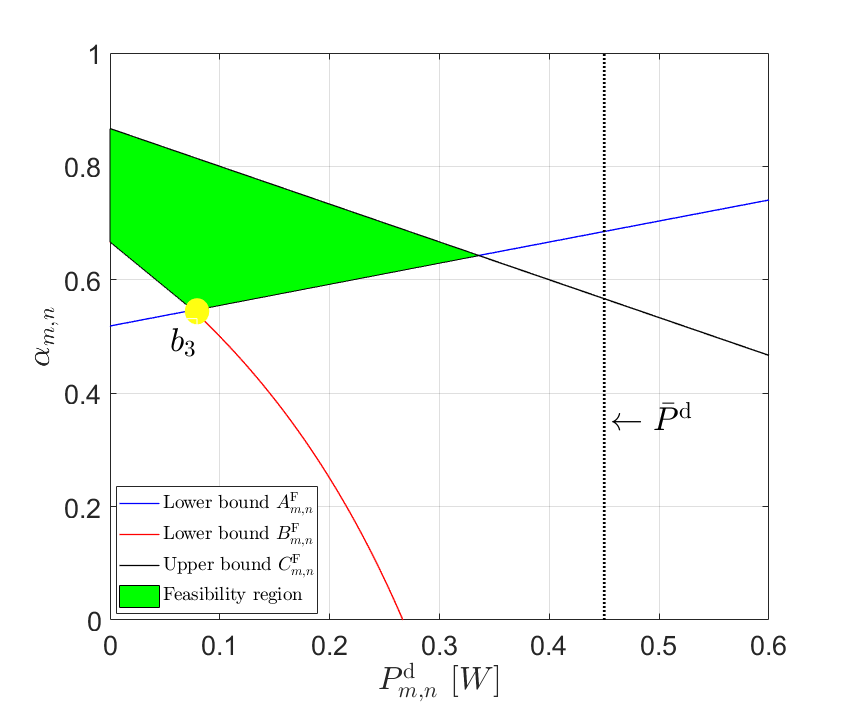}}
\subfigure[Positive lower intersection.]{\label{fig:app2}\includegraphics[width=0.33\linewidth, height = 0.25\textwidth]{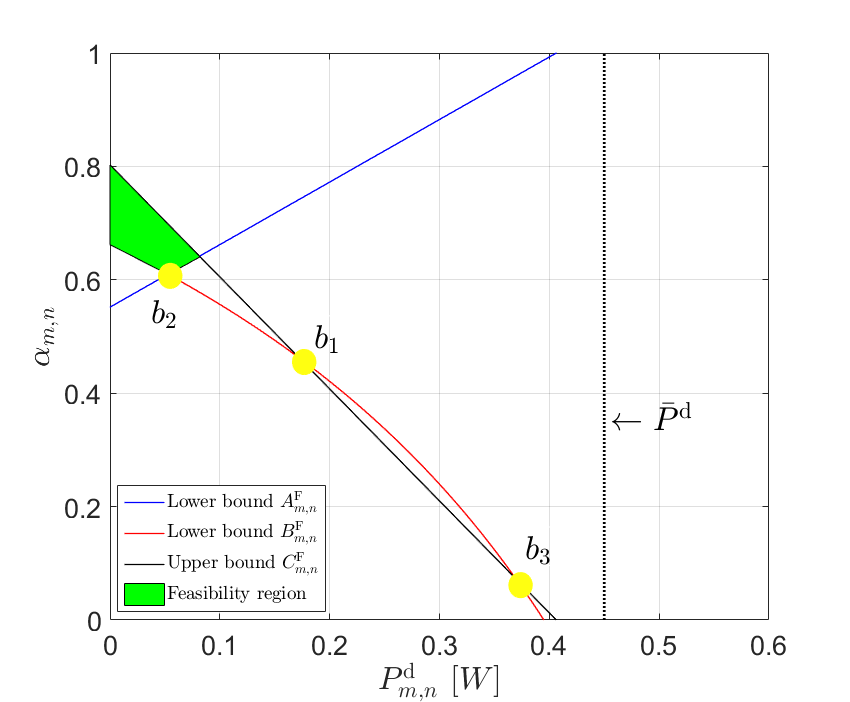}}
\subfigure[Negative lower intersection.]{\label{fig:app3}\includegraphics[width=0.33\linewidth, height = 0.25\textwidth]{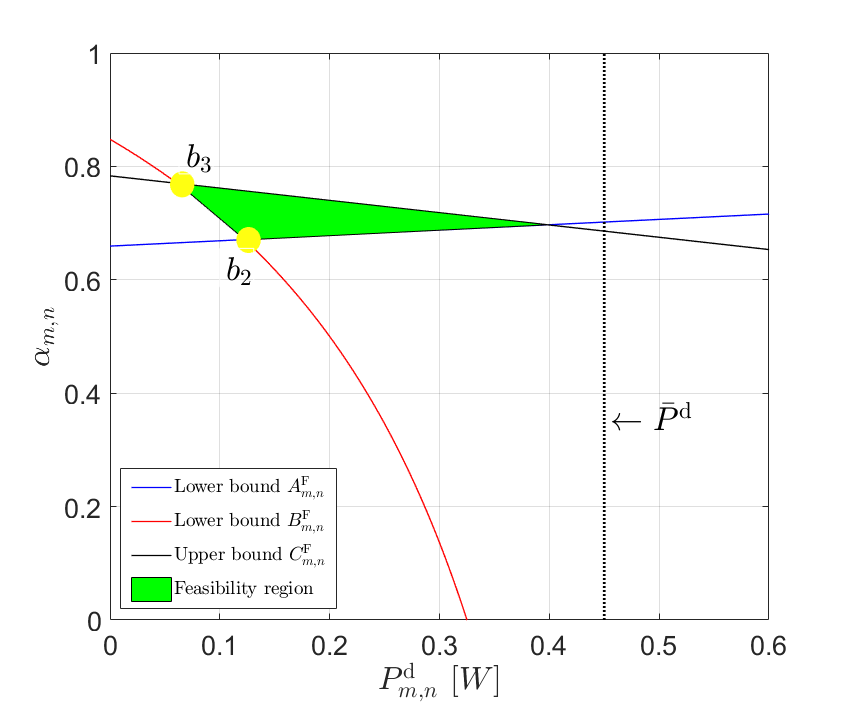}}
\caption{Feasibility region power control problem for FD C-NOMA.}
\label{fig:app}
\end{figure*}
In this section, we present the proof of theorems \ref{theorem 3} and \ref{theorem 4}.At first, note that for any value $\alpha \in [0,1]$, constraints \eqref{constraint:P2.1:minimumRateFull1} and \eqref{constraint:P2.1:minimumRateFull2} can be equivalently transformed into the following inequalities.
\begin{equation}
    \label{feasible_alpha-FD}
        \max \left(A^{\rm F}_{m,n},B^{\rm F}_{m,n}\right) \leq \alpha \leq C^{\rm F}_{m,n},
\end{equation}
where 
\begin{equation}
\begin{split}
    A^{\rm F}_{m,n} \left( P^{\rm d}_{m,n} \right) &= \frac{\delta_{\rm th}^{\rm F}(P_{\rm BS}\gamma_{m}+\gamma^{\rm SI}_m P^{\rm d}_{m,n}+1)}{P_{\rm BS}\gamma_{m}(\delta_{\rm th}^{\rm F}+1)},\\ 
    B^{\rm F}_{m,n} \left( P^{\rm d}_{m,n} \right) &= \frac{(P_{\rm BS}\gamma_{n}+1)(\delta_{\rm th}^{\rm F}-P^{\rm d}_{m,n}\gamma^{\rm d}_{m,n})}{P_{\rm BS}\gamma_{n}(\delta_{\rm th}^{\rm F}+1-P^{\rm d}_{m,n}\gamma^{\rm d}_{m,n})}, \\
    &C^{\rm F}_{m,n} = 1-\frac{\delta_{\rm th}^{\rm F}(\gamma^{\rm SI}_m+1)}{P_{\rm BS}\gamma_{m}}.
\end{split}
\end{equation}
Obviously, $A^{\rm F}$ and $B^{\rm F}$ are functions of the relaying power $P^{\rm d}_{m,n}$, and along with $C^{\rm F}_{m,n}$, they define together the boundaries of the region of the feasible solutions of $\alpha$. Precisely, the feasible region of $\alpha$ is characterized by the relation presented in \eqref{feasible_alpha-FD}. Obviously, there exit feasible solutions if and only if this region is non-empty. In order to characterize this feasible region, we use the geometric representations of \eqref{feasible_alpha-FD} as shown in Fig. \ref{fig:app1},  Fig. \ref{fig:app2} and  Fig. \ref{fig:app1}\ref{fig:app3}, where $b_1$ and $b_3$ are the lower and upper intersections between $B^{\rm F}_{m,n}$ and $C^{\rm F}_{m,n}$, respectively, and $b_2$ is the intersection between $A^{\rm F}_{m,n}$ and $B^{\rm F}_{m,n}$. It can be easily proven that $A^{\rm F}_{m,n}$ intersects with $B^{\rm F}$ at a positive value of $P^{\rm d}_{m,n}$ by solving the equation $A^{\rm F}_{m,n}\left(P^{\rm d}_{m,n}\right) = B^{\rm F}_{m,n}\left(P^{\rm d}_{m,n}\right)$. On the other hand, $B^{\rm F}$ and $C^{\rm F}$ may or may not intersect depending on the value $\Delta_1$ in \ref{Delta_1} that is arisen from solving the equation $B^{\rm F}_{m,n}=C^{\rm F}_{m,n}$. There are three cases where feasible solutions exist as follows. 
\begin{enumerate}
    \item When there is no intersection between $B^{\rm F}_{m,n}$ and $C^{\rm F}_{m,n}$ (Fig. \ref{fig:app1}). 
    \item when the two intersections of $B^{\rm F}_{m,n}$ and $C^{\rm F}_{m,n}$ occur at positive values of $P^{\rm d}_{m,n}$ (Fig. \ref{fig:app2}). 
    \item when the lower intersection of $B^{\rm F}_{m,n}$ and $C^{\rm F}_{m,n}$ occurs at a negative value of $P^{\rm d}_{m,n}$ and their higher intersection (either at a positive or negative value of $P^{\rm d}_{m,n}$) is lower than the intersection between $A^{\rm F}$ and $C^{\rm F}$ (Fig. \ref{fig:app3}).
\end{enumerate}
Based on this, we obtain the feasibility conditions as presented in \eqref{Cond-FD} and it can be easily verified that there is no feasible solutions for $\alpha$ otherwise, which completes the proof for theorem \ref{theorem 3}. \\
\indent After determining the feasibility conditions of $\alpha$, we can proceed to find the optimal power control scheme. First, we note that the inner problem $\mathcal{P}_{\rm inner}$ is non-convex due to constraints \eqref{constraint:P2.1:minimumRateFull1}  and \eqref{constraint:P2.1:minimumRateFull2}. However, it is widely known that the optimal power allocation scheme that maximizes the sum rate follows the water-filling power allocation policy. Thus, we choose the minimum feasible value of $\alpha$ for maximizing the objective function of $\mathcal{P}_{\rm inner}$, and therefore, we the optimal value of $P^{\rm d}_{m,n}$ is the one that achieves the lowest feasible $\alpha$. Based on this and on Fig. \ref{fig:app1},  Fig. \ref{fig:app2} and  Fig. \ref{fig:app1}\ref{fig:app3}, when $\bar{P}^{\rm d} \leq b_2$, the optimal value of $P^{\rm d}_{m,n} = \bar{P}^{\rm d}$ and the optimal value of $\alpha = B^{\rm F}_{m,n} \left( \bar{P}^{\rm d} \right)$. However, when $b_2 \leq \bar{P}^{\rm d}$, the optimal value of $P^{\rm d}_{m,n} = b_2$ and the optimal value of $\alpha = A^{\rm F}_{m,n} \left( b_2 \right) = B^{\rm F}_{m,n} \left( b_2 \right)$, which completes the proof for theorem \ref{theorem 4}.
\bibliographystyle{IEEEtran}
\bibliography{main.bib}

\begin{thebibliography}{10}
\providecommand{\url}[1]{#1}
\csname url@samestyle\endcsname
\providecommand{\newblock}{\relax}
\providecommand{\bibinfo}[2]{#2}
\providecommand{\BIBentrySTDinterwordspacing}{\spaceskip=0pt\relax}
\providecommand{\BIBentryALTinterwordstretchfactor}{4}
\providecommand{\BIBentryALTinterwordspacing}{\spaceskip=\fontdimen2\font plus
\BIBentryALTinterwordstretchfactor\fontdimen3\font minus
  \fontdimen4\font\relax}
\providecommand{\BIBforeignlanguage}[2]{{%
\expandafter\ifx\csname l@#1\endcsname\relax
\typeout{** WARNING: IEEEtran.bst: No hyphenation pattern has been}%
\typeout{** loaded for the language `#1'. Using the pattern for}%
\typeout{** the default language instead.}%
\else
\language=\csname l@#1\endcsname
\fi
#2}}
\providecommand{\BIBdecl}{\relax}
\BIBdecl

\bibitem{Phuc2019joint}
P.~Dinh \emph{et~al.}, ``{Joint user pairing and power control for C-NOMA with
  full-duplex device-to-device relaying},'' in \emph{Proc. IEEE Globecom},
  Waikoloa, HI, USA, Dec. 2019.

\bibitem{Intro1}
C.~V. Networking, ``Cisco global cloud index: Forecast and methodology,
  2016--2021,'' \emph{$\textnormal{White Paper}$}, Cisco, San Jose, CA, USA
  Nov. 2019.

\bibitem{Intro2}
J.~G. Andrews, S.~Buzzi, W.~Choi, S.~V. Hanly, A.~Lozano, A.~C. Soong, and
  J.~C. Zhang, ``{What will 5G be?}'' \emph{IEEE J. on selected areas in
  commun.}, vol.~32, no.~6, pp. 1065--1082, Jun. 2014.

\bibitem{Intro3}
M.~Shafi, A.~F. Molisch, P.~J. Smith, T.~Haustein, P.~Zhu, P.~Silva,
  F.~Tufvesson, A.~Benjebbour, and G.~Wunder, ``{5G: A tutorial overview of
  standards, trials, challenges, deployment, and practice},'' \emph{IEEE J. on
  selected areas in commun.}, vol.~35, no.~6, pp. 1201--1221, Apr. 2017.

\bibitem{Intro4}
P.~Pirinen, ``{A brief overview of 5G research activities},'' in \emph{Proc.
  IEEE 5GU}, Akaslompolo, Finland, Nov. 2014.

\bibitem{Intro5}
A.~Al-Fuqaha, M.~Guizani, M.~Mohammadi, M.~Aledhari, and M.~Ayyash, ``Internet
  of things: A survey on enabling technologies, protocols, and applications,''
  \emph{IEEE Commun. Surveys \& Tutorials}, vol.~17, no.~4, pp. 2347--2376,
  Jun. 2015.

\bibitem{Intro6}
M.~R. Palattella \emph{et~al.}, ``{Internet of things in the 5G era: Enablers,
  architecture, and business models},'' \emph{{IEEE} J. Sel. Areas Commun.},
  vol.~34, no.~3, pp. 510--527, Feb. 2016.

\bibitem{Intro7}
D.~Tsonev, S.~Videv, and H.~Haas, ``{Towards a 100 Gb/s visible light wireless
  access network},'' \emph{Optics express}, vol.~23, no.~2, pp. 1627--1637,
  Jan. 2015.

\bibitem{Intro8}
N.~Bhushan, J.~Li, D.~Malladi, R.~Gilmore, D.~Brenner, A.~Damnjanovic,
  R.~Sukhavasi, C.~Patel, and S.~Geirhofer, ``{Network densification: the
  dominant theme for wireless evolution into 5G},'' \emph{IEEE Commun.
  Magazine}, vol.~52, no.~2, pp. 82--89, Feb. 2014.

\bibitem{Intro9}
X.~Ge, S.~Tu, G.~Mao, C.-X. Wang, and T.~Han, ``{5G ultra-dense cellular
  networks},'' \emph{IEEE Wireless Commun.}, vol.~23, no.~1, pp. 72--79, Mar.
  2016.

\bibitem{islam2017power}
S.~R. Islam \emph{et~al.}, ``{Power-domain non-orthogonal multiple access
  (NOMA) in 5G systems: Potentials and challenges},'' \emph{IEEE Commun.
  Surveys \& Tutorials}, vol.~19, no.~2, pp. 721--742, Oct. 2017.

\bibitem{6861434}
X.~{Zhang} and M.~{Haenggi}, ``The performance of successive interference
  cancellation in random wireless networks,'' \emph{{IEEE} Trans. Inform.
  Theory}, vol.~60, no.~10, pp. 6368--6388, Oct 2014.

\bibitem{liu2016capacity}
L.~Liu, C.~Yuen, Y.~L. Guan, and Y.~Li, ``{Capacity-achieving iterative LMMSE
  detection for MIMO-NOMA systems},'' in \emph{Proc. IEEE ICC}, Kuala Lumpur,
  Malaysia, May. 2016.

\bibitem{li2015investigation}
A.~Li \emph{et~al.}, ``{Investigation on hybrid automatic repeat request (HARQ)
  design for NOMA with SU-MIMO},'' in \emph{Proc. IEEE PIMRC}, Hong Kong,
  China, Sept. 2015.

\bibitem{zhao2016noma}
J.~Zhao, Y.~Liu, K.~K. Chai, Y.~Chen, M.~Elkashlan, and J.~Alonso-Zarate,
  ``{NOMA-based D2D communications: Towards 5G},'' in \emph{Proc. IEEE
  Globecom}, Washington, DC, USA, Dec. 2016.

\bibitem{Ding:NOMA:Application}
Z.~{Ding}, Y.~{Liu}, J.~{Choi}, Q.~{Sun}, M.~{Elkashlan}, C.~{I}, and H.~V.
  {Poor}, ``Application of non-orthogonal multiple access in {LTE} and {5G}
  networks,'' \emph{{IEEE} Commun. Mag.}, vol.~55, no.~2, pp. 185--191,
  February 2017.

\bibitem{islam2017powerNOMA}
S.~R. Islam \emph{et~al.}, ``{Power-domain non-orthogonal multiple access
  (NOMA) in 5G systems: Potentials and challenges},'' \emph{IEEE Commun.
  Surveys \& Tutorials}, vol.~19, no.~2, pp. 721--742, Oct. 2017.

\bibitem{PA:NOMA1}
X.~{Song}, L.~{Dong}, J.~{Wang}, L.~{Qin}, and X.~{Han}, ``Energy efficient
  power allocation for downlink {NOMA} heterogeneous networks with imperfect
  {CSI},'' \emph{{IEEE} Access}, vol.~7, pp. 39\,329--39\,340, 2019.

\bibitem{PA:NOMA2}
J.~{Chen}, J.~{Jia}, Y.~{Liu}, X.~{Wang}, and A.~H. {Aghvami}, ``Optimal
  resource block assignment and power allocation for {D2D}-enabled {NOMA}
  communication,'' \emph{{IEEE} Access}, vol.~7, pp. 90\,023--90\,035, 2019.

\bibitem{PA:NOMA3}
K.~{Wang}, Y.~{Liu}, Z.~{Ding}, A.~{Nallanathan}, and M.~{Peng}, ``User
  association and power allocation for multi-cell non-orthogonal multiple
  access networks,'' \emph{{IEEE} Trans. Wireless Commun.}, pp. 1--1, 2019.

\bibitem{PA:NOMA4}
M.~{Zeng}, A.~{Yadav}, O.~A. {Dobre}, and H.~V. {Poor}, ``Energy-efficient
  joint user-rb association and power allocation for uplink hybrid noma-oma,''
  \emph{{IEEE} Internet Things J.}, vol.~6, no.~3, pp. 5119--5131, June 2019.

\bibitem{Islam:ResourceAllocation:NOMA}
S.~M.~R. {Islam}, M.~{Zeng}, O.~A. {Dobre}, and K.~{Kwak}, ``Resource
  allocation for downlink {NOMA} systems: Key techniques and open issues,''
  \emph{{IEEE} Wireless Commun.}, vol.~25, no.~2, pp. 40--47, April 2018.

\bibitem{NOMAPairing:MatchingAlgorithm}
W.~{Liang} \emph{et~al.}, ``{User Pairing for Downlink Non-Orthogonal Multiple
  Access Networks Using Matching Algorithm},'' \emph{{IEEE} Trans. Commun.},
  vol.~65, no.~12, pp. 5319--5332, Aug. 2017.

\bibitem{Letter:NOMA:Pairing}
L.~{Zhu}, J.~{Zhang}, Z.~{Xiao}, X.~{Cao}, and D.~O. {Wu}, ``{Optimal User
  Pairing for Downlink Non-Orthogonal Multiple Access (NOMA)},'' \emph{{IEEE}
  Wireless Commun. Lett.}, vol.~8, no.~2, pp. 328--331, Apr. 2019.

\bibitem{Ding:Impact:Pairing}
Z.~{Ding}, P.~{Fan}, and H.~V. {Poor}, ``{Impact of user pairing on 5G non
  orthogonal multiple-access downlink transmissions},'' \emph{IEEE Trans. on
  Vehicular Technology}, vol.~65, no.~8, pp. 6010--6023, Sep. 2016.

\bibitem{Ding:NOMA:ModeSelection}
D.~{Zhai}, R.~{Zhang}, Y.~{Wang}, H.~{Sun}, L.~{Cai}, and Z.~{Ding}, ``Joint
  user pairing, mode selection, and power control for d2d-capable cellular
  networks enhanced by nonorthogonal multiple access,'' \emph{{IEEE} Internet
  Things J.}, vol.~6, no.~5, pp. 8919--8932, Oct 2019.

\bibitem{Zhu:OptimalPairing}
L.~{Zhu}, J.~{Zhang}, Z.~{Xiao}, X.~{Cao}, and D.~O. {Wu}, ``Optimal user
  pairing for downlink non-orthogonal multiple access (noma),'' \emph{IEEE
  Wireless Communications Letters}, vol.~8, no.~2, pp. 328--331, April 2019.

\bibitem{7117391}
Z.~Ding, M.~Peng, and H.~V. Poor, ``{Cooperative non-orthogonal multiple access
  in 5G systems},'' \emph{IEEE Commun. Letters}, vol.~19, no.~8, pp.
  1462--1465, Jun. 2015.

\bibitem{BoydS:98:LAA}
M.~Lobo, L.~Vandenberghe, S.~Boyd, and H.~Lebret, ``Applications of
  second-order cone programming,'' \emph{Lin. Alg. and its Applications}, vol.
  284, pp. 193--228, Jan. 1998.

\bibitem{C-NOMA:2users1}
F.~{Kara} and H.~{Kaya}, ``On the error performance of cooperative-{NOMA} with
  statistical {CSIT},'' \emph{{IEEE} Commun. Lett.}, vol.~23, no.~1, pp.
  128--131, Jan 2019.

\bibitem{C-NOMA:2users2}
Q.~Y. {Liau} and C.~Y. {Leow}, ``Successive user relaying in cooperative {NOMA}
  system,'' \emph{{IEEE} Wireless Commun. Lett.}, vol.~8, no.~3, pp. 921--924,
  June 2019.

\bibitem{Liu:C-NOMA:2users}
G.~{Liu}, X.~{Chen}, Z.~{Ding}, Z.~{Ma}, and F.~R. {Yu}, ``Hybrid
  half-duplex/full-duplex cooperative non-orthogonal multiple access with
  transmit power adaptation,'' \emph{IEEE Trans. on Wireless Commun.}, vol.~17,
  no.~1, pp. 506--519, Jan. 2018.

\bibitem{C-NOMA:2users3}
B.~{Ning}, W.~{Hao}, A.~{Zhang}, J.~{Zhang}, and G.~{Gui}, ``Energy
  efficiency–delay tradeoff for a cooperative {NOMA} system,'' \emph{{IEEE}
  Commun. Lett.}, vol.~23, no.~4, pp. 732--735, April 2019.

\bibitem{8417519}
T.~E.~A. {Alharbi} and D.~K.~C. {So}, ``Full-duplex decode-and-forward
  cooperative non-orthogonal multiple access,'' in \emph{{Proc. IEEE VTC}},
  Porto, Portugal, Jun. 2018.

\bibitem{Guo:MultiuserCNOMA}
N.~{Guo}, J.~{Ge}, Q.~{Bu}, and C.~{Zhang}, ``Multi-user cooperative
  non-orthogonal multiple access scheme with hybrid full/half-duplex
  user-assisted relaying,'' \emph{{IEEE} Access}, vol.~7, pp. 39\,207--39\,226,
  2019.

\bibitem{Zhou:multiuserCNOMA1}
Y.~{Zhou}, V.~W.~S. {Wong}, and R.~{Schober}, ``Dynamic decode-and-forward
  based cooperative noma with spatially random users,'' \emph{{IEEE} Trans.
  Wireless Commun.}, vol.~17, no.~5, pp. 3340--3356, May 2018.

\bibitem{Liu:MultiuserCNOMA2:SWIPT}
Y.~{Liu}, Z.~{Ding}, M.~{Elkashlan}, and H.~V. {Poor}, ``Cooperative
  non-orthogonal multiple access with simultaneous wireless information and
  power transfer,'' \emph{{IEEE} J. Sel. Areas Commun.}, vol.~34, no.~4, pp.
  938--953, April 2016.

\bibitem{Alouini:VLC:C-NOMA}
\BIBentryALTinterwordspacing
M.~Obeed \emph{et~al.}, ``User pairing, link selection and power allocation for
  cooperative {NOMA} hybrid {VLC/RF} systems,'' \emph{CoRR}, vol.
  abs/1908.10803, 2019. [Online]. Available:
  \url{http://arxiv.org/abs/1908.10803}
\BIBentrySTDinterwordspacing

\bibitem{KTJ:16:TSP}
J.~Kaleva, A.~Tölli, and M.~J. Juntti, ``Decentralized sum rate maximization
  with {QoS} constraints for interfering broadcast channel via successive
  convex approximation,'' \emph{{IEEE} Trans. Signal Processing}, to appear.

\bibitem{6204010}
S.~{Vanka}, S.~{Srinivasa}, Z.~{Gong}, P.~{Vizi}, K.~{Stamatiou}, and
  M.~{Haenggi}, ``Superposition coding strategies: Design and experimental
  evaluation,'' \emph{{IEEE} Trans. Wireless Commun.}, vol.~11, no.~7, pp.
  2628--2639, July 2012.

\bibitem{Caire:DecodeAndForward}
K.~R. {Kumar} and G.~{Caire}, ``Coding and decoding for the dynamic decode and
  forward relay protocol,'' \emph{{IEEE} Trans. Inform. Theory}, vol.~55,
  no.~7, pp. 3186--3205, July 2009.

\bibitem{Bossert:ChannelCodingBook}
M.~Bossert, \emph{Channel Coding for Telecommunications}, 1st~ed.\hskip 1em
  plus 0.5em minus 0.4em\relax New York, NY, USA: John Wiley \& Sons, Inc.,
  1999.

\bibitem{Laneman:CooperativeDiveristy}
J.~N. {Laneman}, D.~N.~C. {Tse}, and G.~W. {Wornell}, ``Cooperative diversity
  in wireless networks: Efficient protocols and outage behavior,'' \emph{{IEEE}
  Trans. Inform. Theory}, vol.~50, no.~12, pp. 3062--3080, Dec 2004.

\bibitem{8094966}
G.~{Liu}, X.~{Chen}, Z.~{Ding}, Z.~{Ma}, and F.~R. {Yu}, ``Hybrid
  half-duplex/full-duplex cooperative non-orthogonal multiple access with
  transmit power adaptation,'' \emph{IEEE Trans. on Wireless Commun.}, vol.~17,
  no.~1, pp. 506--519, Jan. 2018.

\bibitem{1310306}
S.~{Roy} \emph{et~al.}, ``Maximal-ratio combining architectures and performance
  with channel estimation based on a training sequence,'' \emph{{IEEE} Trans.
  Wireless Commun.}, vol.~3, no.~4, pp. 1154--1164, July 2004.

\bibitem{4205048}
J.~{Tang} and X.~{Zhang}, ``Cross-layer resource allocation over wireless relay
  networks for quality of service provisioning,'' \emph{{IEEE} J. Sel. Areas
  Commun.}, vol.~25, no.~4, pp. 645--656, May 2007.

\bibitem{Tseng:Convergence:BCD}
\BIBentryALTinterwordspacing
P.~Tseng, ``Convergence of a block coordinate descent method for
  nondifferentiable minimization,'' \emph{J. Optim. Theory Appl.}, vol. 109,
  no.~3, pp. 475--494, Jun. 2001. [Online]. Available:
  \url{http://dx.doi.org/10.1023/A:1017501703105}
\BIBentrySTDinterwordspacing

\bibitem{bard2013practical}
J.~F. Bard, \emph{Practical bilevel optimization: algorithms and
  applications}.\hskip 1em plus 0.5em minus 0.4em\relax Springer Science \&
  Business Media, 2013, vol.~30.

\bibitem{Colson2007AnOO}
B.~Colson, P.~Marcotte, and G.~Savard, ``An overview of bilevel optimization,''
  \emph{Annals of Operations Research}, vol. 153, pp. 235--256, 2007.

\bibitem{Li:ModeSelection:Letter}
Y.~{Li}, T.~{Wang}, Z.~{Zhao}, M.~{Peng}, and W.~{Wang}, ``Relay mode selection
  and power allocation for hybrid one-way/two-way half-duplex/full-duplex
  relaying,'' \emph{{IEEE} Commun. Lett.}, vol.~19, no.~7, pp. 1217--1220, July
  2015.

\bibitem{ebbesen2012generic}
S.~Ebbesen, P.~Kiwitz, and L.~Guzzella, ``A generic particle swarm optimization
  matlab function,'' in \emph{2012 American Control Conference (ACC)}.\hskip
  1em plus 0.5em minus 0.4em\relax IEEE, 2012, pp. 1519--1524.

\end{thebibliography}
\end{document}